\newif\ifreport
\newcommand{\myomit}[1]{}
\newcommand{\mycomment}[1]{}
\newtheorem{defi}{Definition}
\newtheorem{theo}{Theorem}
\newtheorem{prop}{Proposition}
\newtheorem{lemm}{Lemma}
\newtheorem{coro}{Corollary}
\newtheorem{exam}{Example}
\newenvironment{definition}{\begin{defi} \rm }{\end{defi}}
\newenvironment{theorem}{\begin{theo} \rm }{\end{theo}}
\newenvironment{proposition}{\begin{prop} \rm }{\end{prop}}
\newenvironment{lemma}{\begin{lemm} \rm }{\end{lemm}}
\newenvironment{corollary}{\begin{coro} \rm }{\end{coro}}
\newenvironment{example}{\begin{exam} \rm }{\end{exam}}
\newenvironment{proof}{\begin{trivlist} \item[\hspace{\labelsep}\bf Proof:]}{\hfill$\Box$\end{trivlist}}
\newcommand{\tuple}[1]{\langle #1 \rangle}
\newcommand{\nontop}[1]{\textsc{NonTop}(#1)}
\def\nat{\mathbb{N}}
\def\eqdef{\stackrel{\triangle}{=}}
\def\vis{\textit{vis}}
\def\vislen{vislen}
\def\oddresponse{\textsc{OddResponse}}
\newcommand{\plays}[2]{\Pi(#1,#2)}
\renewcommand{\i}{\ensuremath{\ocircle}\xspace}
\newcommand{\odd}{\ensuremath{\square}\xspace}
\newcommand{\even}{\ensuremath{\Diamond}\xspace}
\newcommand{\player}{\ensuremath{\ocircle}\xspace}
\newcommand{\comment}[1]{\textcolor{blue}{[#1]}}
\newcommand{\solitaire}[1][]{\ensuremath{\mc{G}^{#1}}\xspace}
\newcommand{\post}[1]{\ensuremath{#1^{\bullet}}}
\newcommand{\pnot}[1]{\bar{#1}}
\newcommand{\domain}[1]{\textbf{dom}(#1)}
\newcommand{\attrsym}{\ensuremath{\textit{Attr}}}
\newcommand{\attr}[3][]{\ensuremath{{#2}{\text{-}}\attrsym^{#1}(#3)}}
\newcommand{\myattr}[3]{\ensuremath{{#2}{\text{-}}\attrsym^{#1}(#3)}}
\newcommand{\attrW}[3]{\ensuremath{\mathop{{#1}{\text{-}}\attrsym_{#3}(#2)}}}
\newcommand{\myattrW}[4]{\ensuremath{\mathop{{#2}{\text{-}}\attrsym^{#1}_{#4}(#3)}}}
\newcommand{\winsubodd}[1]{\textsf{Win}_{\odd}(#1)}
\newcommand{\winsubeven}[1]{\textsf{Win}_{\even}(#1)}
\newcommand{\minprio}[1]{\textsf{min}_{\priosym}(#1)}
\newcommand{\mc}[1]{\ensuremath{\mathcal{#1}}}
\newcommand{\ie}{\emph{i.e.}\xspace}
\newcommand{\eg}{\emph{e.g.}\xspace}
\newcommand{\viz}{\emph{viz.}\xspace}
\newcommand{\oftype}{{:}}
\newcommand{\priosym}{\mathcal{P}}
\newcommand{\prio}[1]{\priosym(#1)}
\newcommand{\strategy}[1]{\mathbb{S}_{#1}}
\newcommand{\memstrategy}[1]{\mathbb{S}_{#1}^*}
\begin{document}
\title{Strategy Derivation for Small Progress Measures}
\author{M. Gazda and T.A.C. Willemse
\institute{Eindhoven University of Technology, The Netherlands}
}

\else

\title{Improvement in Small Progress Measures}
\def\titlerunning{Improvement in Small Progress Measures}
\author{Maciej Gazda and Tim A.C. Willemse
\institute{Eindhoven University of Technology, The Netherlands}
}
\def\authorrunning{M. Gazda and T.A.C. Willemse}




\fi

\def\runtimeceil{O(dm \cdot (n/\lceil d / 2 \rceil)^{\lceil d/2 \rceil})}
\def\runtimefloor{O(dm \cdot (n/\lfloor d / 2 \rfloor)^{\lfloor d/2 \rfloor})}
\def\floord2{{\lfloor d/2 \rfloor}}
\def\ceild2{{\lceil d/2 \rceil}}

\ifreport
\maketitle 
\else

\begin{document}
\maketitle 

\fi

\begin{abstract}
Small Progress Measures is one of the most efficient parity game solving algorithms. The original algorithm provides the full solution (winning regions and strategies) in $\runtimeceil$ time, and requires a re-run of the algorithm on one of the winning regions. We provide a novel operational interpretation of progress measures, and modify the algorithm so that it derives the winning strategies for both players in one pass. This  reduces the upper bound on strategy derivation for SPM to $\runtimefloor$. 
\end{abstract}

\section{Introduction}

A parity game \cite{EJ:91,McN:93,Zie:98} is an infinite duration
game played on a directed graph by two players called \emph{even}
and \emph{odd}. Each vertex in the graph is owned by one of the
players, and labelled with a natural number, called a priority.
The game is played by pushing a token along the edges in the graph;
the choice where to move next is made by the owner of the vertex
on which the token currently resides.  The winner of the thus
constructed play is determined by the parity of the minimal (or
maximal, depending on the convention) priority that occurs infinitely
often, and the winner of a vertex is the player who has a \emph{strategy}
to force every play originating from that vertex to be winning for
her.  Parity games are determined; that is, each vertex is won by
some player ~\cite{McN:93}. \emph{Solving} a game essentially means
deciding which player wins which vertices in the game.

Parity games play an important role in several foundational results;
for instance, they allow for an elegant simplification of the hard
part of Rabin's proof of the decidability of a monadic second-order
theory, and a number of decision problems of importance can be
reduced to deciding the winner in parity games. For instance, the
model checking problem for the modal $\mu$-calculus is equivalent,
via a polynomial-time reduction, to the problem of solving parity
games~\cite{EJS:93,SS:98}; this is of importance in computer-aided
verification.  Winning strategies for the players play a crucial
role in supervisory control of discrete event systems, in which
such strategies are instrumental in constructing a supervisor that
controls a plant such that it reaches its control objectives and
avoids bad situations; see \eg~\cite{AVW:03} and the references
therein. In model checking, winning strategies are essential in
reporting witnesses and counterexamples, see~\cite{SS:98}.

A major impetus driving research in parity
games is their computational status. Even though the solution problem
belongs to both the complexity classes NP and coNP, no polynomial
algorithm has been devised so far. 
Taking a slightly simplified view, today's deterministic algorithms
for parity game solving can be classified into three categories: a
category of two early classical algorithms, \viz the \emph{recursive
algorithm}~\cite{Zie:98}, solving games with $d$ different priorities,
$n$ vertices and $m$ edges in $O(m\cdot n^d)$ and the \emph{small
progress measures} (SPM) algorithm~\cite{Jur:00}, solving games in
$\runtimefloor$; a category of the fastest known algorithms, \viz
the deterministic subexponential algorithm~\cite{JPZ:06} and the
\emph{bigstep} algorithm~\cite{Sch:07}; and a category of strategy improvement
algorithms \cite{VJ:00,Sch:08,Fea:10}.

For a considerable time, strategy improvement algorithms were
perceived as likely candidates for solving parity games in polynomial
time, but they were ultimately proven to be exponential in the
worst-case~\cite{Fri:11}.  In fact none of today's deterministic
strategy improvement algorithms matches the bigstep algorithm or
the deterministic subexponential algorithm.  The latter is a
modification of the classical recursive algorithm, running in
$n^{O(\sqrt{n})}$, and the bigstep algorithm combines the recursive
algorithm and the SPM algorithm, running in $O(m \cdot (\kappa n
/d)^{\gamma(d)})$, where $\kappa$ is a small constant and $\gamma(d)
\approx d/3$.

Somewhat surprisingly, our knowledge of the classical algorithms
is still far from complete. For instance, the recursive algorithm
is regarded as one of the best algorithms in practice, which is
corroborated by experiments~\cite{FL:09}. However, until our recent
work~\cite{GW:13} where we showed the algorithm is well-behaved on
several important classes of parity games, there was no satisfactory
explanation why this would be the case. In a similar vein, in
\emph{ibid.} we provided tighter bounds on the worst-case running time, but so far, no tight bounds
for this seemingly simple algorithm have been established.  We
expect that, if improvements on the upper bound on the parity game
solving problem can be made, such improvements will come from
improvements in, or through a better understanding of the classical
algorithms; this expectation is fuelled by the fact that these
classical algorithms are at the basis of the currently optimal
algorithms.

Here, we focus on the second classical algorithm, namely Jurdzi\'nski's
small progress measures algorithm. Using a fixpoint computation,
it computes a \emph{progress measure}, a labelling of vertices,
that witnesses the existence of winning strategies.   In general, no clear,
intuitive interpretation of the information contained in the progress
measures has been given, and the mechanics of the algorithm are
still quite mysterious.  This is in contrast to the self-explanatory
recursive algorithm, and the strategy improvement algorithm, where,
thanks to ordering of plays according to profiles, at every step,
one has a clear picture of the currently known best strategy. Apart
from Jurdzi{\'n}ski's original article, some additional insight was
offered in \cite{2001automata} (an intuitive progress measure in
the setting of solitaire games), and also in Schewe's paper on
\emph{bigstep} \cite{Sch:07} (restricted codomain and small dominions).
Our \emph{first} contribution  is to provide a better understanding of
these progress measures and the intermediate values in the fixpoint
computation, see Section~\ref{sec:interpretation}.  By doing so,
a better understanding of the algorithm itself is obtained.

Progress measures come in two flavours, \viz even-and odd-biased,
and their computation time is bounded by either $\runtimefloor$ or
$\runtimeceil$, depending on the parity of the extreme priorities.
From an even-biased progress measure, one immediately obtains winning
regions for \emph{both} players, but only a winning strategy for
player even on its winning region and not for her opponent. Likewise
for odd-biased progress measures. Obtaining the winning strategy
for an opponent thus requires re-running the algorithm on the
opponent's winning region.  Note that the effort that needs to be
taken to obtain a strategy in the same time bound as the winning
region stems from a more general feature of parity games: a winning
partition in itself does not allow one to efficiently compute a
winning strategy (unless there is an efficient algorithm for solving
parity games). This basic result, which we nevertheless were unable
to find in the literature, is formalised in Section \ref{sec:strategy}.

An essential consequence of this is that the algorithm solves a
parity game in $\runtimefloor$, as one can always compute one of
the two types of progress measures in the shorter time bound. Contrary
to what is stated in~\cite{Jur:00},
the same reasoning does not apply to computing the winning strategy for
a fixed player; this still requires $\runtimeceil$ in the worst
case,  as also observed by Schewe in~\cite{Sch:07}.
An intriguing open problem is
whether it is at all possible to derive the winning strategies for
both players while computing one type of measure only, as this would
lower the exponent in the time bound to $\floord2$.  Our \emph{second}
contribution is to
give an affirmative answer to the above problem.  We modify the
generic SPM by picking a partial strategy when a vertex won by
player $\odd$ is discovered, and subsequently adjust the lifting
policy so that it prioritises the area which contains an $\odd$-dominion.
Both steps are inspired by the interpretation of the progress measures
that we discuss in Section~\ref{sec:interpretation}.
Like the original algorithm, our solution, which we present in
Section~\ref{sec:strategy}, still works in polynomial
space. 
\ifreport
\else
\textbf{Additional lemmata and formal proofs of all results,
can be found in our technical report~\cite{GW:14}.}
\fi


\ifreport
\section{Preliminaries}
\label{sec:preliminaries}

We briefly introduce parity games in Section~\ref{sec:parity_games} and
Jurdzi\'nski's Small Progress Measures algorithm in Section~\ref{sec:spm}.
For an in-depth treatment of both, we refer to~\cite{2001automata} and the
references therein.

\fi

\ifreport
\subsection{Parity Games}
\label{sec:parity_games}
\else
\section{Parity Games}
\label{sec:parity_games}
\fi

A parity game is an infinite duration game, played by players \emph{odd},
denoted by $\odd$ and \emph{even}, denoted by $\even$, on a directed,
finite graph. The game is formally defined as follows.

\begin{definition}
A parity game is a tuple $(V, E, \priosym, (V_\even,V_\odd))$, where
\begin{itemize}
\item $V$ is a finite set of vertices, partitioned in a set $V_\even$ of
vertices owned by player $\even$, and a set of vertices $V_\odd$ 
owned by player $\odd$,
\item $E \subseteq V \times V$ is a total edge relation, \ie all vertices
have at least one outgoing edge,
\item $\priosym \oftype V \to \nat$ is a priority function that assigns
priorities to vertices.
\end{itemize}
\end{definition}
Parity games are depicted as graphs; diamond-shaped nodes
represent vertices owned by player $\even$, box-shaped nodes
represent vertices owned by player $\odd$ and the priority assigned
to a vertex is written inside the node, see the game depicted in
Figure~\ref{fig:example}.
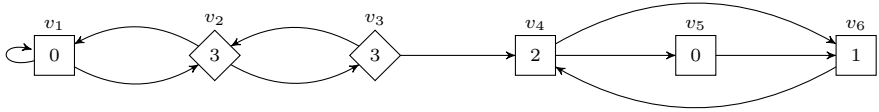
\begin{figure}[h!]
\centering
\begin{tikzpicture}[>=stealth']
\tikzstyle{every node}=[draw, inner sep=2pt, outer sep=0pt, node distance=40pt];
  \node[label=above:{\scriptsize $v_6$},shape=rectangle,minimum size=15] (y4)                    {\scriptsize 1};
  \node[label=above:{\scriptsize $v_5$},shape=rectangle,minimum size=15] (y5) [left of=y4,xshift=-20pt]                    {\scriptsize 0};
  \node[label=above:{\scriptsize $v_4$},shape=rectangle,minimum size=15]   (y6) [left of=y5,xshift=-20pt]                   {\scriptsize 2};
  \node[label=above:{\scriptsize $v_3$},shape=diamond,minimum size=19] (y7) [left of=y6,xshift=-20pt]                   {\scriptsize 3};
  \node[label=above:{\scriptsize $v_2$},shape=diamond,minimum size=19]   (y8) [left of=y7,xshift=-20pt]                   {\scriptsize 3};
  \node[label=above:{\scriptsize $v_1$},shape=rectangle,minimum size=15] (y9) [left of=y8,xshift=-20pt]                   {\scriptsize 0};

\path[->]
  (y4) edge[bend left] (y6)
  (y5) edge (y4)
  (y6) edge (y5)
  (y6) edge[bend left] (y4) 
  (y7) edge (y6) edge[bend right] (y8) 
  (y8) edge[bend right] (y7) edge[bend right] (y9)
  (y9) edge[loop left] (y9) edge[bend right] (y8) 
;
\pgfresetboundingbox
\path[use as bounding box] (-0.8\textwidth,-0.8) rectangle (0.27,0.8);

\end{tikzpicture}
\caption{A simple parity game in which 4 vertices are owned by player odd, 2 vertices
are owned by player even and with 4 different priorities.}
\label{fig:example}
\end{figure}
Throughout this section, assume that $G = (V,E,\priosym, (V_\even,
V_\odd))$ is an arbitrary parity game.  We write $v \to w$ iff
$(v,w) \in E$, and $\post{v}$ to denote the set of successors of
$v$, \ie\ $\{w \in V ~|~ v \to w\}$. For a set of vertices $W \subseteq V$, we will denote
the minimal priority occurring in $W$ with $\minprio{W}$; by 
$V_i$ we denote the set of vertices with priority $i$; likewise
for $V_{\le k}$. Henceforth, we assume that $\i$ denotes an arbitrary
player; that is $\i \in \{\odd,\even\}$. We write $\pnot{\i}$ for
$\i$'s opponent:  $\pnot{\even}=\odd$ and $\pnot{\odd}=\even$.  For
a set $A \subseteq V$, we define $G \cap A$ as the structure $(A,
E \cap (A \times A), \priosym|_A, (V_\even \cap A, V_\odd \cap A))$;
the structure $G \setminus A$ is defined as $G \cap (V \setminus
A)$. The structures $G \cap A$ and $G \setminus A$ are again a
parity game if their edge relations are total. 

\paragraph*{Plays and Strategies}
A sequence of vertices $v_1, \ldots, v_n$ is a \emph{path} if $v_m \to
v_{m+1}$ for all $1 \leq m < n$.  Infinite paths are defined in a similar
manner. We write $\pi_i$ to denote the $i^\textrm{th}$ vertex in a path $\pi$.

A game starts by placing a token on some vertex $v \in V$.  Players
$\even$ and $\odd$ move the token indefinitely according to a single
simple rule: if the token is on some vertex that belongs to player
$\i$, that player moves the token to an adjacent vertex.
An infinite sequence of vertices created this way is called a
\emph{play}.  The \emph{parity} of the \emph{least} priority that occurs
infinitely often on a play defines the \emph{winner} of the play:
player $\even$ wins if, and only if this priority is even. 


A \emph{strategy} for player $\i$ is a partial function $\sigma
\oftype V^+ \to V$ satisfying that whenever it is defined for a
finite path $u_1 \cdots u_n \in V^+$, both $u_n \in V_{\i}$ and
$\sigma(u_1\cdots u_n) \in \post{u_n}$. We denote the set of
strategies of player $\player$ by $\memstrategy{\player}$.  
An infinite play $u_1\
u_2\ u_3 \cdots$ is \emph{consistent} with a strategy $\sigma$ if
all prefixes $u_1 \cdots u_n$ of the play for which $\sigma(u_1\
\cdots u_n)$ is defined, $u_{n+1} = \sigma(u_1 \cdots u_n)$. The
set of all plays, consistent with some strategy $\sigma$, and starting
in $v$ is denoted $\plays{\sigma}{v}$. Some
strategy $\sigma$ is winning for player $\i$ from vertex $v$ iff
all plays consistent with $\sigma$ are won by player $\i$.  Vertex
$v$ is won by player $\i$ whenever she has a winning strategy for
all plays starting in vertex $v$.  Parity games are
\emph{determined}~\cite{EJ:91}, meaning that every vertex is won
by one of the players; they are even \emph{positionally determined},
meaning that if $\i$ wins a vertex then she has a winning
\emph{positional strategy}: a strategy that determines where to
move the token next based solely on the vertex on which the token
currently resides. Such strategies can be represented by a function
$\sigma\oftype V_{\i} \to V$, and the set of all such strategies for
player $\player$ is denoted $\strategy{\player}$.  \emph{Solving} a parity game $G$
essentially means computing the partition $(\winsubeven{G},\winsubodd{G})$
of $V$ into vertices won by player $\even$ and player $\odd$,
respectively.

\begin{example}
In the parity game of Figure~\ref{fig:example}, 
$v_1, v_2$ and $v_3$ are won by player $\even$; $v_4, v_5$ and $v_6$ are
won by player $\odd$. A winning positional strategy for player $\even$
is to play from $v_2$ to $v_1$ and from $v_3$ to $v_2$. A winning
strategy for $\odd$ is to move from $v_4$ to $v_6$ and
from $v_5$ to $v_6$.
\end{example}

\paragraph*{Attractors and Dominions}
An $\i$-\emph{attractor} into a
set $U \subseteq V$ contains all those vertices from which player $\i$ can
force any play to $U$; it is formally defined as follows.
\begin{definition} The $\i$-\emph{attractor} into a set $U \subseteq V$,
denoted $\attr{\i}{U}$, is the least set $A \subseteq V$ satisfying:
\begin{enumerate}
 \item $U \subseteq A$
 \item
 \begin{enumerate}
 \item if $w \in V_{\i}$ and $\post{w} \cap A \neq \emptyset$, then $w \in A$
 \item if $w \in V_{\pnot{\i}}$ and $\post{w} \subseteq A$, then $w \in A$
 \end{enumerate}
\end{enumerate}

\end{definition}
Observe that the complement of an attractor set of any subset of a
parity game induces a parity game, \ie $G  \setminus \attr{\i}{U}$
for any $U$ and $\i$ is a well-defined parity game.  Whenever we
refer to an \emph{attractor strategy} associated with $\attr{\i}{U}$,
we mean the positional strategy that player $\i$ can employ to force
play to $U$; such a strategy can be computed in $\mathcal{O}(|V|+|E|)$
using a straightforward fixpoint iteration.

A set of vertices $U$ is an $\i$-dominion whenever there is a
strategy $\sigma$ for player $\i$ such that every play starting in
$U$ and conforming to $\sigma$ is winning for $\i$ and stays within
$U$. A game is a \emph{paradise} for player $\i$ if the entire game
is an $\i$-dominion.

\ifreport
We shall frequently work with strategies or dominions \emph{in the context of a certain subgame $G' \subset G$}, which do not retain their properties when moving to a larger context of $G$. For instance, consider a subset of vertices $W \subset V$ that induces a subgame $G \cap W$, and moreover that there is a subset $D \subseteq W$ which is a \player-dominion \emph{in $G \cap W$}. Observe that, in general, $D$ is not a \player-dominion within $G$. In such cases we always explicitly state which context is assumed.

\fi



\newcommand{\Nat}{\ensuremath{\mathbb{N}}}
\newcommand{\prog}[3]{\ensuremath{\textsf{Prog}(#1,#2,#3)}}
\newcommand{\lift}[2]{\ensuremath{\textsf{Lift}(#1,#2)}}
\newcommand{\liftodd}[2]{\ensuremath{\textsf{Lift}_\odd(#1,#2)}}
\def\progname{\textsf{Prog}}
\def\liftname{\textsf{Lift}}

\ifreport
\subsection{Jurdzi\'nski's Small Progress Measures Algorithm}
\label{sec:spm}
\else
\section{Jurdzi\'nski's Small Progress Measures Algorithm}
\label{sec:spm}
\fi

The SPM algorithm works by computing a \emph{measure} associated with
each vertex that characterises even (resp.\ odd) cycles: it is such
that it decreases along the play with each ``bad'' odd priority
encountered, and only increases upon reaching a beneficial even
priority. In what follows, we recapitulate the essentials for defining
and studying the SPM algorithm; our presentation is---as in
the original paper by Jurdzi\'nski---from the perspective of player
$\even$. \medskip

Let $G = (V, E, \priosym, (V_{\even}, V_{\odd}))$ be a parity game.
Let $d$ be a positive number and let $\alpha
\in \Nat^d$ be a \emph{$d$-tuple} of natural numbers.  We number
its components from $0$ to $d-1$, \ie $\alpha = (\alpha_0, \alpha_1,
\dots, \alpha_{d-1})$, and let $<$ on such $d$-tuples be given by
the lexicographic ordering. These tuples will be used to (partially)
record how often we can or must see vertices of a particular priority
on all plays. We define a derived ordering $<_i$ on $k$-tuples and
$l$-tuples of natural numbers as follows:
\[
(\alpha_0, \alpha_1, \dots, \alpha_k) <_i (\beta_0,\beta_1,\dots,\beta_l)
\text{ iff }
(\alpha_0, \alpha_1, \dots, \alpha_i) < (\beta_0,\beta_1,\dots,\beta_i)
\]
where, if $i > k$ or $i >l$, the tuples are suffixed with $0$s.
Analogously, we write $\alpha =_i \beta$ to denote that $\alpha$ and
$\beta$ are identical up-to and including position $i$.
Intuitively, the derived ordering provides enough information to
reason about the interesting bits of plays: when encountering a
priority $i$ in a play, we are only interested in how often we can
or must visit vertices of a more significant (\ie smaller) priority
than $i$,  whereas we no longer care about the less significant priorities that we shall encounter.

Now, assume from hereon that $d-1$ is the largest priority occurring in
$G$; \ie, $d$ is one larger than the largest priority
in $G$.  For $i \in \Nat$, let $n_i = |V_i|$. 
Define $\mathbb{M}^{\even} \subseteq \Nat^d \cup \{\top\}$, containing $\top$
($\top \notin \Nat^d$)
and only those $d$-tuples with $0$ on \emph{even} positions and
natural numbers $\leq$ $n_i$ on \emph{odd} positions $i$.

The lexicographical ordering $<$ and the family of orderings
$<_i$ on $d$-tuples is extended to an ordering on $\mathbb{M}^\even$
by setting $\alpha < \top$ and $\top = \top$. 
Let $\rho {:} V \to \mathbb{M}^\even$ and suppose $w \in \post{v}$. Then
$\prog{\rho}{v}{w}$ is the least $m \in \mathbb{M}^{\even}$, such that
\begin{itemize}
  \item $m \geq_{\prio{v}} \rho(w)$ if $\prio{v}$ is even,
  \item $m >_{\prio{v}} \rho(w)$, or $m = \rho(w) = \top$ if $\prio{v}$ is odd.
\end{itemize}
\begin{definition} Function
$\rho$ is a \emph{game parity progress measure}
if for all $v \in V$:
\begin{itemize}
  \item if $v \in V_{\even}$, then for some $w \in \post{v}$, $\rho(v) \geq_{\prio{v}} \prog{\rho}{v}{w}$
  \item if $v \in V_{\odd}$, then for all $w \in \post{v}$, $\rho(v) \geq_{\prio{v}}\prog{\rho}{v}{w}$
\end{itemize}
\end{definition}

\begin{proposition}[Jurdzi\'nski~\cite{Jur:00}]\label{prop:jurdzinski}
If $\rho$ is the \emph{least} game parity progress measure, then for all $v \in V$:
$\rho(v) \neq \top$ iff $v \in W_{\even}$.
\end{proposition}
The least game parity progress measure can be characterised as the least fixpoint
of a monotone transformer on the complete lattice we define next. Let
$\rho,\rho' {:} V \to \mathbb{M}^\even$ and define
$\rho \sqsubseteq \rho'$ as $\rho(v) \leq \rho'(v)$ for all $v \in V$. We write
$\rho \sqsubset \rho'$ if $\rho \sqsubseteq \rho'$ and $\rho \not= \rho'$. Then
the set of all functions $V \to \mathbb{M}^\even$ with $\sqsubseteq$ is a complete
lattice. The monotone transformer defined on this set is given as follows:
\[
\lift{\rho}{v} = \begin{cases}
\rho[v \mapsto \max \{ \rho(v), \min\{ \prog{\rho}{v}{w} \mid v \to w \} \}] &
\text{if $v \in V_{\even}$}\\
\rho[v \mapsto \max \{ \rho(v), \max\{ \prog{\rho}{v}{w} \mid v \to w \} \}] &
\text{if $v \in V_{\odd}$}
\end{cases}
\]
As a consequence of Tarski's fixpoint theorem, we know the least fixpoint of the
above monotone transformer exists and can be computed using Knaster-Tarski's iteration
scheme. This leads to the original SPM algorithm, see Algorithm~\ref{alg:spm}.
\begin{algorithm}[h!t]
\begin{algorithmic}[1]
\Function{SPM}{$G$}
\State \emph{\textbf{Input} $G = (V, E, \priosym, (V_\even, V_\odd))$}
\State \emph{\textbf{Output} Winning partition $(\winsubeven{G},\winsubodd{G})$}
\State $\rho  \gets \lambda v \in V.~(0, \dots, 0)$
\While{$\rho \sqsubset \lift{\rho}{v}$ for some $v \in V$}
  \State $\rho \gets \lift{\rho}{v}$ for some $v \in V$ such that $\rho \sqsubset \lift{\rho}{v}$
\EndWhile
\State \Return $(\{v \in V ~|~ \rho(v) \not= \top\}, \{v \in V ~|~ \rho(v) = \top\})$
\EndFunction
\end{algorithmic}
\caption{The original Small Progress Measures Algorithm}
\label{alg:spm}
\end{algorithm}
Upon termination of the iteration within the SPM algorithm, the
computed game parity progress measure $\rho$ is used to compute the
sets $\winsubeven{G}$ and $\winsubodd{G}$ of vertices won by player $\even$ and
$\odd$, respectively. 
\begin{theorem}[See~\cite{Jur:00}] Algorithm~\ref{alg:spm} solves a parity
game in $\runtimefloor$.
\end{theorem}
The above runtime complexity is obtained by considering the more optimal
runtime of solving a game $G$, or $G$'s 'dual', obtained by shifting
all priorities by one and swapping ownership of all vertices.
The runtime complexity for computing winning strategies for both players using
the SPM algorithm is worse.  A winning strategy
$\sigma_\even {:} V_\even \to V$ for player $\even$ can be extracted
from $\rho$ by setting $\sigma_\even(v) = w$ for $v \in V_\even
\cap \winsubeven{G}$ and $w \in \post{v}$ such that $\rho(w) \le \rho(w')$
for all $w' \in \post{v}$.  A winning strategy for player $\odd$
cannot be extracted from $\rho$ \emph{a posteriori}, so, as also
observed in~\cite{Sch:07}, the runtime
complexity of computing a winning strategy cannot be improved by
considering the dual of a game (contrary to what is stated in~\cite{Jur:00}). 
\begin{theorem}[See also~\cite{Sch:07}]
Algorithm~\ref{alg:spm} can compute
winning strategies for both players in~$\runtimeceil$. 
\end{theorem}

To facilitate the analysis of SPM, we will use the following  terms and notions. The term \emph{measure} refers to the intermediate values of $\rho$ in the lifting process as well. Given a tuple $m \in \mathbb{M}^\even$, we say that the position $i$ in $m$ is \emph{saturated}, if $(m)_i = |V_i|$.

\ifreport
\def\liftseq{ms}

A convenient abstraction of an instance of SPM being executed on a particular game is a sequence of intermediate measure values $\rho_0 \rho_1 \dots \rho_C$, where $\rho_C$ is the current measure value (as we frequently consider partial executions of SPM, $\rho_C$ is not necessarily the final, stable measure). Formally, we define a \emph{lifting context} as a tuple $\tuple{G,\liftseq}$, where $G$ is a parity game, and 
$\liftseq = \rho_0 \rho_1 \dots \rho_C$ a sequence of all intermediate measure values.
\fi

\section{An operational interpretation of progress measures}
\label{sec:interpretation}

\def\globmaxprio{max \priosym}

\def\eventups{\mathbb{M}^{\even}}
\def\exteventups{\mathbb{M}^{\even}_{ext}}
\def\oddtups{\mathbb{M}^{\odd}}

\def\allplays{\Pi}
\def\profilesym{\theta}
\newcommand{\profilefun}[1]{\profilesym_{#1}}
\newcommand{\profile}[2]{\profilesym_{#1}(#2)}

\newcommand{\maxval}[2]{\varphi^{*}_{#1}(#2)}

\newcommand{\succtup}[2]{\textsf{succ}_{#1}(#2)}

While SPM is a relatively simple algorithm in the sense that it is
concise and its individual steps are elementary operations, it lacks
a clear and appealing explanation of the devices used. It is therefore
difficult to understand, and possibly enhance. Apart from the formal
definition of progress measures, little explanation
of what is hidden behind the values in tuples is offered.
Notable exceptions are \cite{Klau:02}, which explains that when restricted
to $\odd$-solitaire games, one can use the maximal degrees of `odd
stretches' (a concept we make precise below) in order to define a
certain parity progress measure, and Schewe's bigstep paper \cite{Sch:07},
where it is shown that dominions of a bounded size can be detected
using measures with a restricted codomain. In general, the  high-level
intuition is that the larger progress measure values indicate more
capabilities of player $\odd$, and a value at a given position in the
tuple is somehow related to the number of priorities that $\odd$
can enforce to visit.

In what follows, we present a precise and operational interpretation
of measures. Our interpretation is similar in spirit to the one used in \cite{Klau:02},
but applicable to all parity games, and uses a concept known
from the realm of strategy improvement algorithms -- a value (or
profile) of a play. Here, values are defined in terms of maximal odd-dominated
stretches occurring in a prefix of a play. With this notion at hand,
we can consider an optimal valuation of vertices, being the best
lower bound on play values that player $\even$ can enforce, or the
worst upper bound that $\odd$ can achieve, \ie it is an \emph{equilibrium}.
The optimal valuations range over the same codomain as progress
measures, and the main result of this section states that the least
game parity progress measure is equal to the optimal valuation.


Let $\exteventups$ denote all tuples in $\nat^{d} \cup \{\top\}$
such that for all $m \in \exteventups$ and even positions $i \leq
d$, $(m)_i =0$; \ie compared to $\eventups$, we drop the requirement that
values on odd positions $i$ are bounded by $|V_i|$. Elements in $\exteventups$ are ordered in the same fashion as those in $\eventups$. Given a 
play $\pi$, a \emph{stretch} is a subsequence
$\pi_s \pi_{s+1} \dots \pi_{s+l}$ of $\pi$. For a priority $k$, a
\emph{$k$-dominated stretch} is a stretch in which the minimal
priority among all vertices in the stretch is $k$. The \emph{degree}
of a $k$-dominated stretch is the number of vertices with priority
$k$ occurring in the stretch.
\begin{definition}
Let us denote all plays in the parity game by $\allplays$. An $\even-$\emph{value} (or simply value) of a play is a function $\profilefun{\even}: \allplays \longrightarrow \exteventups$ defined as follows:
\begin{itemize}
 \item if $\pi$ is winning for $\odd$, then $\profile{\even}{\pi} = \top$
 \item if $\pi$ is winning for $\even$, then $\profile{\even}{\pi} = m$, where 
 $m \neq \top$, and for every odd position $i$, $(m)_i$ is the \emph{degree} of the maximal $i$-dominated stretch that is a prefix of $\pi$ 
\end{itemize}
\end{definition}
Observe that a play value is well-defined, as an
infinite $i$-dominated stretch for an odd $i$ implies that a game
is won by $\odd$, and its value is $\top$ in such case. 

\begin{example}
Suppose that the sequence of priorities corresponding to a certain
play $\pi$ is $453453213(47)^{*}$. Then $\profile{\even}{\pi} =
(0,1,0,2,0,0,0,0)$.

\end{example}
 
 

\ifreport

\paragraph{Successor up-to-$k$} For $m \in \eventups \setminus \{\top\}$ and $k \in \nat$, we will denote with $\succtup{k}{m}$ the least $m' \in \eventups$ such that $m' >_{k} m$.

\begin{lemma}
\label{lem:evencan}
 If $\rho$ is a game progress measure of a parity game $G$, then for all $v$ there is a strategy $\sigma_{\even} \in \strategy{\even}$ such that for every $\pi \in \plays{\sigma_{\even}}{v}$, $\profile{\even}{\pi} \leq \rho(v)$
\end{lemma}

\begin{proof}
 We focus on the nontrivial case when $\rho(v) \neq \top$; we will show that player $\even$ has a strategy to force plays with values not exceeding $\rho(v)$.  The strategy in question (denoted by $\sigma_{min}$) is the same as used by the SPM algorithm -- $\even$ always picks the vertex minimising $\rho(v)$.
 
 We proceed with induction on $\rho(v)$. To prove the base case $(\rho(v) = (0,0,\dots,0))$, observe first that if $\pi \in \plays{\sigma_{min}}{v}$, then for all $k$ such that $\rho(\pi_k) <_i \rho(\pi_{k+1})$, we have $\prio{\pi_k} < i$ (this follows from the definition of $\sigma_{min}$ and properties of game progress measures).
 
 Take any $\pi \in \plays{\sigma_{min}}{v}$. Suppose, towards contradiction, that 
 $(\profile{\even}{\pi})_j > 0$ for some odd $j$. Let $\pi_l$ be the first vertex with priority $j$ in $\pi$. From the definition of game progress measure, $\rho(\pi_l) >_j (0,0,\dots,0)$. Since $\rho(\pi_0) = (0,0,\dots,0)$, this means that for some $k$, $0 \leq k \leq l-1$, we must have $\rho(\pi_{k}) <_j \rho(\pi_{k+1})$; and from the inital observation we obtain $\prio{\pi_k}<j$. But it means that a value smaller than $j$ occurs before the first occurence of priority $j$ in $\pi$, hence  $(\profile{\even}{\pi})_j=0$, a contradiction.
 
  For the inductive step, we assume that whenever the value of $\rho(w)$ for any game progress measure of an arbitrary game $G$ and its vertex $w$ is lower than $m$, then for all plays consistent with $\sigma_{min}$ and starting at $w$, their values do not exceed $\rho(w)$.
 
 Take $v$ with $\rho(v)=m$. Let $\pi = v_0 v_1 \dots$ be a play starting at $v_0=v$ and conforming to $\sigma_{min}$, and let $m' = \profile{\even}{\pi}$. Observe that since $\sigma_{min}$ is a memoryless strategy winning for player $\even$ \cite{Jur:00}, $\pi$ cannot pass any odd-dominated cycles, and we have $\profile{\even}{\pi} \in \eventups$. 
 
 We proceed to prove that $m' \leq  m$. Let $k$ be the largest (least significant) position such that $(m')_k > 0$. There exists a non-trivial $k$-dominated stretch in the prefix of $\pi$; let $v_n$ be the first vertex with priority $k$ occurring in $\pi$. From the game progress measure property, the way $\sigma_{min}$ is defined, and the fact that $k$ is the least priority occurring between $v_1$ until $v_n$, we know that $\rho(v_i) \geq_{k} \rho(v_{i+1})$ for all $i: 0 \leq i \leq n$, and moreover for $i=n$ the inequality is strict. Hence we have $m = \rho(v) \geq_{k} \rho(v_n) >_{k} \rho(v_{n+1})$. 
 
 Let $\pi_{post}$ be the postfix of $\pi$ starting with $v_{n+1}$. By applying the inductive hypothesis to $v_{n+1}$, we know that $\profile{\even}{\pi_{post}} \leq \rho(v_{n+1})$, and hence $\profile{\even}{\pi_{post}}<_{k} m$.  Since $v_n$ was the first occurrence of priority $k$ in $\pi$, and dominating the prefix, we have $(\profile{\even}{\pi})_k = (\profile{\even}{\pi_{post}})_k + 1$, and $(\profile{\even}{\pi})_i = (\profile{\even}{\pi_{post}})_i$ for $i<k$. In short, we have thus $m' = \profile{\even}{\pi} = \succtup{k}{\profile{\even}{\pi_{post}}}$ (here, we use the fact that $\profile{\even}{\pi} \in \eventups$). Since $m >_{k} \profile{\even}{\pi_{post}}$, we obtain $m \geq \succtup{k}{\profile{\even}{\pi_{post}}} = m'$. \qed 
\end{proof}

\myomit{
\begin{lemma}
\label{lem:carrying}
 If $\odd$ can force all plays originating at $v$ to have profile at least $m \in \exteventups$, and for some $k$, $(m)_k > |V_k|$, then $\odd$ can force at least $m'$ such that $ m' >_{k-1} m$.
\end{lemma}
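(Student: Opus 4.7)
The plan is to exploit pigeonhole on priority-$k$ vertices to extract a cycle of odd minimum priority $k$, and then use positional determinacy to turn that single cycle into an outright win for $\odd$. Since $\top$ dominates every non-$\top$ profile in any reasonable sense, this will yield $m' = \top$, which sits above $m$ in the $>_{k-1}$ order.

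I would begin by fixing an $\odd$-strategy $\sigma$ witnessing the hypothesis, so that every play from $v$ consistent with $\sigma$ has $\even$-value at least $m$. In particular, the maximal $k$-stretch prefix of any such play contains strictly more than $|V_k|$ occurrences of the odd priority $k$. Since there are only $|V_k|$ vertices of priority $k$, some vertex $w \in V_k$ is revisited inside this prefix; the portion of the play between two visits to $w$ is a cycle $C$ lying entirely within the $k$-stretch, so every priority on $C$ is at least $k$, and because $w$ itself has priority $k$, the minimum priority on $C$ is exactly $k$.

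The central step is to upgrade this single observed cycle into a strategy that loops it forever. Assume for contradiction that $\even$ wins from $v$; by positional determinacy of parity games, $\even$ has a positional winning strategy $\tau$. Let $\pi$ be the play produced by $\sigma$ against $\tau$ and extract $C$ and $w$ from $\pi$. Define a new $\odd$-strategy $\sigma'$ that copies $\sigma$ up to the first visit of $w$ and, thereafter, at every $\odd$-vertex of $C$ takes the outgoing $C$-edge. Because $\tau$ is positional and already chose the $C$-edges at the $\even$-vertices of $C$ during $\pi$, playing $\sigma'$ against $\tau$ produces an infinite play that remains in $C$ from the first visit of $w$ onward; its minimum priority is the odd number $k$, so $\odd$ wins, contradicting that $\tau$ was winning for $\even$. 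Hence $\odd$ wins from $v$ outright and can force profile $\top$, giving the required $m' = \top$.

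The main obstacle is precisely this positionality step: pigeonhole alone produces only one lap of the cycle, and one genuinely needs $\tau$'s memorylessness to guarantee that $\even$ cannot deviate on subsequent laps to escape the odd-parity loop. A minor subtlety worth pinning down is the interpretation of ``force profile at least $m$'': for the pigeonhole to apply one needs, for non-$\top$ plays, the lower bound $(m)_k$ on the $k$-th coordinate of the play's profile, which is immediate if the ordering on $\exteventups$ is taken to be componentwise (and plays of value $\top$ are already won by $\odd$, so the conclusion holds trivially for them).
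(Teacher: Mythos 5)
Your argument aims at a conclusion that is strictly stronger than the lemma and is in fact false: under the stated hypotheses $\odd$ need not win from $v$ at all, so you cannot take $m' = \top$. The root of the problem is the point you flag only as a ``minor subtlety'' at the end. The order on $\exteventups$ is lexicographic, with the more significant (smaller) priorities compared first --- this is forced by the paper's use of $>_k$, by the definition of $\succtup{k}{m}$ as a \emph{least} element above $m$, and by the phrase ``compensated at more significant positions'' --- and under the lexicographic order the hypothesis that every play has profile at least $m$ gives no lower bound whatsoever on the $k$-th coordinate of an individual play: a play may satisfy the bound by exceeding $m$ at some position $j < k$ while visiting priority $k$ very few times. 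Consequently your pigeonhole step collapses: the particular play obtained by running $\sigma$ against a positional winning strategy $\tau$ of $\even$ need not contain any cycle of odd minimal priority, and no contradiction arises.

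A concrete counterexample to $m' = \top$: take vertices $a$ (priority $3$, owned by $\even$), $b$ (priority $4$), $c$ (priority $1$), $d$ (priority $2$), with edges $a \to b$, $b \to a$, $a \to c$, $c \to d$, $d \to d$, starting at $a$; then $|V_3| = 1$. Every play induces priorities $(3\,4)^\omega$, with profile $\top$, or $(3\,4)^n\,3\,1\,2^\omega$, with profile $(0,1,0,n{+}1,0,\dots)$; all of these are lexicographically at least $m = (0,0,0,2,0,\dots)$, and $(m)_3 = 2 > |V_3|$, so the hypotheses hold with $k = 3$. Yet $\even$ wins from $a$ by moving to $c$ immediately, so $\top$ is not forceable; the lemma's actual conclusion holds here with, e.g., $m' = (0,1,0,0,0,\dots) >_{2} m$. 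This shows what a correct proof must deliver and what the paper's proof does deliver: a case split in which plays that see a priority smaller than $k$ before priority $k$ has recurred more than $|V_k|$ times are shown to overshoot $m$ at a position more significant than $k$ (the compensation argument), while plays that stay within $V_{\geq k}$ are handled by a history-pruning surgery on $\sigma$ whose two outcomes are either infinitely many odd-dominated cycles (value $\top$) or a contradiction with $\sigma$ forcing $m$. Your cycle-pumping against a positional $\even$-strategy is essentially a sound substitute for the second branch of that surgery, but it cannot replace the first branch, which is where the $>_{k-1}$ in the conclusion actually comes from.
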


\begin{proof}
Take $\sigma$ that forces at least $m$. We define $\sigma'$ in the same way as $\sigma$ on histories without cycles, and for the rest:

\begin{itemize}
 \item if we only moved within $V_{\geq k}$ so far, we add a rule: whenever an odd-dominated cycle occurs (say the history is $\lambda = \lambda_1.u.\lambda_2.u$ such that $u.\lambda_2.u$ is the only odd-dominated cycle in $\lambda$), its entire contents is ``forgotten'', i.e. the history is reset to $\lambda_1.u$.
 \item if a smaller (more significant) priority than $k$ is visited, then from that point onwards we proceed according to the original $\sigma$
\end{itemize}

Take any play $\pi$ adhering to $\sigma'$. First observe that if there is a priority smaller than $k$ occurring in $\pi$, then $\profile{\even}{\pi} \geq_{k-1} m$, because right after a priority smaller than $k$ was encountered, the remaining suffix adheres entirely to $\sigma$, which guarantees meeting obligations from $m$ at positions smaller than $k$. 

But in this case we have in fact a strong inequality ($\profile{\even}{\pi} >_{k-1} m$). This is because we have moved from a point in history that does not contain an odd-dominated cycle. It means that the degree of $k$-stretch seen so far has not exceeded $V_k$, and hence it is smaller than $(m)_k$, but since we played according to $\sigma$, it must be that $\profile{\even}{\pi} \geq_{k} m$, and the smaller play value at position $k$ needs to be compensated at more significant positions.

In case the play stays entirely within $V_{\geq k}$, then we have two cases:
\begin{itemize}
 \item history has been pruned infinitely many times because of odd cycles - it is winning for $\odd$, and its value is $\top$
 \item history has been pruned only finitely many times - if we remove the pruned part from the play $\pi$, we obtain a play $\pi'$ that adheres to $\sigma$, and does not meet obligations from $m$, because it is neither winning for $\odd$, nor does it visit priority $k$ $(m)_k$ times, and nor does it visit any more significant priority to compensate for it. A contradiction.
\end{itemize}
\qed

\end{proof}

\begin{corollary}
\label{cor:oddforcestandarddomain}
For every vertex $v$, there exists the largest lower bound on play values starting at $v$ that $\odd$ can force, and it belongs to the standard finite domain $\eventups$.

\end{corollary}

\begin{proof}
 Firstly, for every vertex $v$, the set of values which $\odd$ can force on plays starting at $v$ is nonempty, as $(0,\dots,0)$ always works. 
 
 We will now prove that for every value $m \in \exteventups$ that $\odd$ can force, $\odd$ can also force $m' \geq m$ for some $m' \in \eventups$. If $m \in \eventups$, we are done. Suppose $m \in \exteventups \setminus \eventups$. We then take the most significant $k$ such that $(m)_k > |V_k|$, and the corresponding $m'$ from Lemma \ref{lem:carrying}. If we obtain a value $(m')_i > |V_i|$ for some $i<k$, then we repeat the same process, until we obtain the final $m'$ which is either equal to $\top$, or such that $(m')_i \leq |V_i|$ for all $i \leq k$. In the latter case,  substituting $0$s for all positions no more significant than $k$ yields a value $m'' \in \eventups$ which can be forced by player $\odd$, and is greater than the original $m$.
 
 The main statement folows now from the finiteness of $\eventups$.
 \qed
\end{proof}
}

\myomit{
We will prove the above statement using the definition of the least progress measure as the least fixed point of the lifting operator. Fix a game $G$. We proceed by induction on the number of liftings that were performed until the partial measure $\rho$ has been reached.

Base case (no liftings performed, $\rho(w) = (0,0,\dots,0)$ for all $w$) is trivial. For the induction step we fix a vertex $v$ which was the last vertex to be lifted before the measure $\rho$ was obtained, and as the inductive hypothesis, we assume that for all partial measures $\rho'$ that were obtained in all intermediate stages of lifting before $v$ was lifted to $\rho$, and all vertices $w$, $\odd$ has a strategy $\sigma_w$ such that for all $\pi \in \plays{\sigma_w}{w}$, $\profile{\even}{\pi} \geq \rho'(w)$.

We need to show that $\odd$ can force a play on $v$ whose value is at least $\rho(v)$ (we only need to show it for $v$, as the $\rho$-values of all other nodes are as in the previous measure approximation $\rho_{prev}$, and for them IH applies). We can define a history-wise strategy $\sigma$ which achieves this as follows: firstly, if $v \in V_{\odd}$, then let $\sigma(v)$ be the successor maximising $\rho$. Furthermore, for all histories of the form $v.w.\lambda$, where $w$ is a successor of $v$ in $G|_{\sigma}$ ($G$ restricted to $\sigma$, i.e. a game in which, if $\sigma$ is defined on $v$, $v$ has only $\sigma(v)$ as a successor, otherwise the same as $G$), we define $\sigma(v.w.\lambda) = \sigma_w(w.\lambda)$, where $\sigma_w$ is the strategy that exists from IH, forcing every play starting at $w$ to have a value at least $\rho_{prev}(w)$.

Take any play $\pi = v. \pi' \in \plays{\sigma}{v}$, and let $w$ be the vertex that appears in $\pi$ right after $v$ ($w$ is the first vertex in $\pi'$). Obviously, the occurrence of the vertex $v$ before $\pi'$ does not influence the lengths of $i$-stretches in $\pi$, as compared to $\pi'$, for $i < \prio{v}$, hence we have 
$\profile{\even}{\pi} \geq_{\prio{v}} \profile{\even}{\pi'}$. More specifically, 

\paragraph{Observation 1} $\profile{\even}{\pi} =_{\prio{v}-1} \profile{\even}{\pi'}$, and if $\prio{v}$ is even, then  $(\profile{\even}{\pi})_{\prio{v}} = (\profile{\even}{\pi'})_{\prio{v}}$, whereas if $\prio{v}$ is odd, then $(\profile{\even}{\pi})_{\prio{v}} = (\profile{\even}{\pi'})_{\prio{v}} + 1$.

\vspace{0.5cm}

If $\prio{v}$ is even, then from the definition of the lifting operator, we have $\rho(v) \leq_{\prio{v}} \rho_{prev}(w)$. From the definition of $\sigma$, we have $\profile{\even}{\pi'} \geq \rho_{prev}(w) $; combining the above observations, we obtain $\profile{\even}{\pi} =_{\prio{v}} \profile{\even}{\pi'} \geq \rho_{prev}(w) \geq_{\prio{v}} \rho(v)$. Since measure can only have non-zero values at positions at least as significant as the priority of a given vertex, we have $(\rho(v))_i = 0$ for $i>\prio{v}$, and hence the above chain of (in)equalities yields $\profile{\even}{\pi} \geq \rho(v)$.

If $\prio{v}$ is odd, then from the definition of the lifting operator, $\succtup{\prio{v}}{\rho_{prev}(w)} \geq \rho(v)$. Recall that $\profile{\even}{\pi'} \geq \rho_{prev}(w)$. We distinguish two subcases.

If $\profile{\even}{\pi} \in \eventups$ (standard domain), from Observation 1 we obtain that $\profile{\even}{\pi} = \succtup{\prio{v}}{\profile{\even}{\pi'}}$. Hence we have $\profile{\even}{\pi} \geq \succtup{\prio{v}}{\rho_{prev}(w)} \geq \rho(v)$.

If $\profile{\even}{\pi} \in \exteventups \setminus \eventups$, then from Corollary \ref{cor:oddforcestandarddomain} there exists $m \in \eventups, m > \profile{\even}{\pi}$ such that $\odd$ can force all plays starting in $v$ to have value at least $m$. We need to show that $m \geq \rho(v)$. 


\comment{Finish the second case, using Lemma \ref{lem:carrying} in case of carrying}.

}

\begin{lemma}
\label{lem:oddcan}
 If $\overline{\rho}$ is the least game progress measure of a parity game $G$, then there is a strategy $\sigma_{\odd} \in \memstrategy{\odd}$ such that for every $\pi \in \plays{\sigma_{\odd}}{v}$, $\profile{\even}{\pi} \geq \overline{\rho}(v)$
\end{lemma}

\begin{proof}

The strategy in question is the same as the lifting-history based strategy that we introduce in section \ref{sec:bounded}; the lemma follows directly from Proposition \ref{prop:oddresponse_playvalue}.

\end{proof}

From lemmata \ref{lem:evencan} and \ref{lem:oddcan} we obtain the following theorem.

\fi

\begin{theorem}
\label{thm:progint}
If $\overline{\rho}$ is the least progress measure of a parity game $G$, then, for all $v$:
\begin{enumerate}
 \item there is a strategy $\sigma_{\odd} \in \memstrategy{\odd}$ such that for every $\pi \in \plays{\sigma_{\odd}}{v}$, $\profile{\even}{\pi} \geq \overline{\rho}(v)$
 \item there is a strategy $\sigma_{\even} \in \strategy{\even}$ such that for every $\pi \in \plays{\sigma_{\even}}{v}$, $\profile{\even}{\pi} \leq \overline{\rho}(v)$
\end{enumerate}
\end{theorem}

The above theorem links the progress measure values to players' capabilities to enforce beneficial plays or avoid harmful ones, where the benefit from a play is measured by a specially devised play value, as it is done in strategy improvement algorithms. 
This offers a more operational view on progress measure values, which, combined with a more fine-grained analysis of the mechanics of SPM allows us to extract winning strategies for both players in the next section.


A notable difference between strategy improvement algorithms and SPM is
that SPM does not work with explicit strategies, and the intermediate
measure values do not represent any proper valuation induced by
strategies -- only the final least progress measure does. Still,
these intermediate values give an underapproximation of the
capabilities of player $\odd$ in terms of odd-dominated stretches
that she can enforce.

Note that a consequence of Theorem \ref{thm:progint} is that the least (resp. greatest) play values that player $\odd$ (resp. $\even$) can enforce are equal, and coincide with $\overline{\rho}$.

\section{Strategy construction for player \odd}
\label{sec:strategy}

Computing winning strategies is typically part of a practical
solution to a complex verification or a controller synthesis problem.
In such use cases, these strategies are employed to
construct witnesses and counterexamples for the verification problems,
or for constructing control strategies for the controller~\cite{AVW:03}.
As we explained in Section \ref{sec:spm}, the SPM algorithm can
easily be extended to construct a winning strategy for player
$\even$. The problem of deriving a winning strategy for player \odd
in SPM (other than by running the algorithm on the `dual' game, or
by using a `dual' domain $\mathbb{M}^{\odd}$) has, however, never
been addressed. Note that the problem of computing strategies is
at least as hard as solving a game. Indeed, even if we are equipped
with a valid winning partition $(\winsubeven{G},\winsubodd{G})$ for
a game $G$, then deriving the strategies witnessing these partitions
involves the same computational overhead as the one required to
compute $(\winsubeven{G},\winsubodd{G})$ in the first place.
\begin{restatable}{proposition}{propstrategiesdifficult} 
\label{prop:strategies_difficult}
The problem of finding the winning partition
$(\winsubeven{G},\winsubodd{G})$ of a given game $G$ can be reduced
in polynomial time to the problem of computing a winning strategy
for player $\player$ in a given $\player$-dominion.
\end{restatable}

\ifreport
\begin{proof}
 We will reduce the problem of recognising whether a given set $D$ is a dominion of a given player to the strategy derivation problem. The former problem is known to be polynomially equivalent to the winning partition problem \cite{DKT:2012}.

 Suppose there is an algorithm ${\cal A}$ that, given a dominion $D \subseteq V(G)$ of player $\player$, computes a winning strategy $\sigma$ of player $\player$, closed on $D$. Moreover, we assume that the worst-case running time of ${\cal A}$ has an upper bound $T(|V|,|E|,d)$. We can construct an algorithm ${\cal A}'$ that decides whether $D$ is a $\player$-dominion in $O(T(|V|,|E|,d) + (|V|+|E|) \cdot \log d)$ by simply running ${\cal A}$ on $D$ and analysing the outcome.
 \begin{itemize}
  \item ${\cal A}$ has not returned a well-defined strategy $\sigma$ within $T(|V|,|E|,d)$ steps. In this case the answer is \textbf{no}
  \item ${\cal A}$ has returned some answer $\sigma$ within $T(|V|,|E|,d)$ steps. By solving the induced solitaire game in $(|V|+|E|) \cdot \log d$ time, we verify whether $\sigma$ is indeed a winning strategy for $\player$ on $D$. Is so, return \textbf{yes}, otherwise return \textbf{no}. 
 \end{itemize} 
\end{proof}

\fi

\begin{figure}[h]
\centering
\begin{tikzpicture}[>=stealth']
\tikzstyle{every node}=[draw, inner sep=0pt, outer sep=0pt, node distance=30pt];
  \node[shape=rectangle,minimum size=23] (y1)                            {\scriptsize $2$};
  \node[shape=rectangle,minimum size=23] (y2) [right of=y1,xshift=20pt]  {\scriptsize $1$};
  \node[shape=rectangle,minimum size=23] (y21) [right of=y2,xshift=20pt]  {\scriptsize $3$};
  \node[draw=none] (y3) [right of=y21,xshift=20pt] {\Huge $\dots$};
  \node[shape=rectangle,minimum size=23] (y4) [right of=y3,xshift=20pt]  {\scriptsize $2N-2$};
  \node[shape=rectangle,minimum size=23] (y5) [right of=y4,xshift=20pt]  {\scriptsize $2N-1$};
  \node[shape=rectangle,minimum size=23] (y6) [right of=y5,xshift=20pt]  {\scriptsize $2N$};

\path[->]
  (y1) edge (y2) 
  (y2) edge (y21)
  (y21) edge (y3)
  (y3) edge (y4)
  (y4) edge (y5)
  (y5) edge (y6)
  (y6.south west) edge[bend left=7] (y1.south east) 
  (y6) edge[loop right] (y6)
;
\end{tikzpicture}
\caption{A parity game won by player $\odd$. Solving the game using
$\mathbb{M}^{\even}$, the first $\top$ value is reached after the first full pass
of the cycle containing priority $1$ ($O(N^2)$ using a fair lifting
strategy), and it will then propagate through the game.
Solving the dual game, or using $\mathbb{M}^{\odd}$ takes exponential
time to lift the node with priority $2N$.}

\label{fig:tops_faster}
\end{figure}
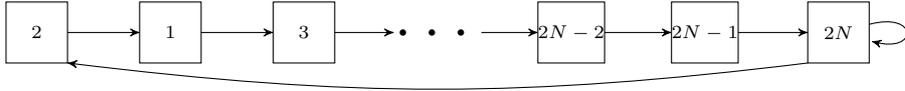
Deriving a strategy for both players by using the SPM to compute
$\mathbb{M}^{\even}$ measures and $\mathbb{M}^{\odd}$ measures
consecutively, or even simultaneously, affects, as we already
discussed in Section~\ref{sec:spm}, SPM's runtime.  This is nicely
illustrated by the family of games depicted in
Figure~\ref{fig:tops_faster}, for which lifting to top using an
even-biased measure is exponentially faster than arriving at a
stable ``non-top'' odd-biased measure.  Being able to compute \odd
strategies without resorting to the aforementioned methods would
allow us to potentially significantly improve efficiency on such
instances.  It may be clear, though, that extracting a winning
strategy from the small progress measures algorithm for vertices
with measure $\top$ will require modifying the algorithm (storing
additional data, augmenting the lifting process).  For instance,
simply following the vertex that caused the update to top, fails, as
the example below shows.

\begin{example} \label{ex:greedytop_wrong} Reconsider the game
depicted in Figure~\ref{fig:example}. Recall that 
vertices $v_4, v_5$ and $v_6$ are won by player $\odd$, and in all
possible lifting schemes, the first vertex whose measure becomes
$\top$ is $v_6$. After that, a possible sequence of liftings is
that first $\rho(v_5)$ is set to $\top$ (due to $v_6$), followed
by $\rho(v_4) = \top$ (due to $v_5$). If we set the strategy
based on the vertex that caused the given vertex to be lifted to top,
we obtain $\sigma(v_4) = v_5$, which is not winning for
$\odd$.

\end{example}
The key problem is that for vertices won by player \odd, from some
point onward, the lifting process discards significant information.
This is best seen in case of lifting to $\top$ -- a partial
characterisation of reachable odd priorities contained in a tuple
(see also our previous section) is ultimately replaced with a mere
indication that player \odd can win.

\ifreport
\subsection{Key observation}
\label{sec:intuition}

At this point we shall give an intuitive explanation of the main insight that enables us to define part of player \odd strategy in the course of lifting, once a top value is reached. In section \ref{sec:bounded} the observations made here will be formalised and proved, leading to Theorem \ref{thm:centraal}, which forms the basis of our algorithm.

\else
\subsection{A Bounded $\odd$-Dominion}
\fi

Consider a game $G$ on which a standard SPM algorithm with an arbitrary lifting policy has been applied. Suppose that at some point a vertex $v$ is the first one to be lifted to $\top$, and after lifting of $v$ the algorithm is suspended, resulting is some temporary measure $\rho$. Let $k$ be the priority of $v$. 

We will start with two straightforward observations. The first one
is that $k$ must be an odd number; this is because a vertex with an
even priority obtains, after lifting, a $\rho$-value equal to the
$\rho$-value of one of its successors, and therefore it cannot be
the first vertex lifted to $\top$. Another immediate conclusion is
that at least one (or all, if $v \in V_{\even}$) successor(s) of
$v$ has (have)  a $\rho$-value saturated up to the $k$-th position,
\ie it is of the form $m = (0,|V_1|, 0, |V_3|,\dots, 0, |V_k|, ***)$;
were it not the case, then a non-top value $m'$ such that $m' >_{k}
m$ would exist, which would be inconsistent with the definition of
\progname.

\begin{figure}[h!]
\centering
\begin{tikzpicture}[>=stealth']
\tikzstyle{every node}=[draw, inner sep=2pt, outer sep=0pt, node distance=40pt];

\node[draw=none] (x) {};
\node[draw=none] (w) [below of=x,xshift=-10pt,yshift=37pt] {};
\path[draw,fill=gray!8] (w) ellipse (2.4cm and 1.7cm) ;
\node[shape=rectangle,minimum size=12pt,label=left:{\scriptsize $v$},label=right:{\scriptsize $\rho(v) = \top$}] (y) [above of=x,xshift=-28pt,yshift=-15pt] {\scriptsize $k$};

\node[shape=rectangle,minimum size=12pt,label=left:{\scriptsize $u_{\text{max}}$}] (z1) [below of= y,xshift=-20pt] {};

\node[draw=none] (z2) [below of=y,xshift=0pt] {$\dots$};
\node[draw=none] (z3) [below of=y,xshift=20pt] {};

\node[draw=none] (z5) [below of=y,xshift=92pt] {};
\node[draw=none] (z6) [below of=y,xshift=104pt] {};
\node[draw=none] (u) [below of=z2,yshift=20pt] {\scriptsize \begin{tabular}{l}$\odd$-dominion \\ $D \subseteq \bigcup_{i \ge k} V_i$ \end{tabular} };

\path[draw] (x) node[draw=none,xshift=63pt,yshift=10pt] {\Large $V$} ellipse (2.8cm and 2.0 cm);

\draw[->] (y) edge[color=red] (z1) edge  (z3) edge (z5) edge (z6)
;

\pgfresetboundingbox
\path[use as bounding box] (-2.8,-2) rectangle (2.8,2.0);
\end{tikzpicture}
\qquad\quad
\begin{tikzpicture}[>=stealth']
\tikzstyle{every node}=[draw, inner sep=2pt, outer sep=0pt, node distance=40pt];

\node[draw=none] (x) {};
\node[draw=none] (w) [below of=x,xshift=-10pt,yshift=37pt] {};
\path[draw,fill=gray!8] (w) ellipse (2.4cm and 1.7cm) ;
\node[shape=diamond,minimum size=17pt,label=left:{\scriptsize $v$},label=right:{\scriptsize $\rho(v) = \top$}] (y) [above of=x,xshift=-28pt,yshift=-15pt] {\scriptsize $k$};
\node[draw=none] (z1) [below of=y,xshift=-20pt] {};
\node[draw=none] (z2) [below of=y,xshift=0pt] {$\dots$};
\node[draw=none] (z3) [below of=y,xshift=20pt] {};
\node[draw=none] (u) [below of=z2,yshift=20pt] {\scriptsize \begin{tabular}{l}$\odd$-dominion \\ $D \subseteq \bigcup_{i \ge k} V_i$ \end{tabular} };

\path[draw] (x) node[draw=none,xshift=63pt,yshift=10pt] {\Large $V$} ellipse (2.8cm and 2.0 cm);

\draw[->] (y) edge (z1) edge  (z3)
;

\pgfresetboundingbox
\path[use as bounding box] (-2.8,-2) rectangle (2.8,2.0);
\end{tikzpicture}
\caption{Snapshot of the SPM algorithm after the first vertex $v$ is lifted to top.}
\label{fig:egg}
\end{figure}
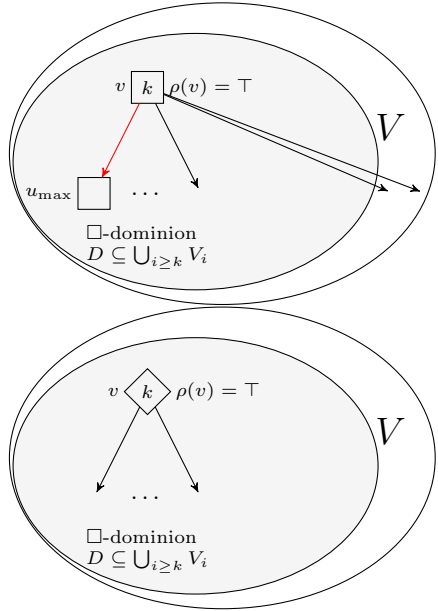
There are two more complex properties, which we can utilise to modify the SPM algorithm and compute the winning strategy for player \odd (see Figure \ref{fig:egg}). 
\begin{enumerate}
 \item Vertex $v$ belongs to an \odd-dominion $D$ within $G$ such that the minimal priority in $D$ is $k$.
 \item If $v \in V_{\odd}$, then picking the successor $u_{max}$ of $v$ with the maximal current $\rho$-value among $\post{v}$ is a part of a (positional) winning strategy for \odd that stays within such a dominion $D$ as described above. 
 \end{enumerate}
The intuition concerning the above facts is as follows. Vertices with
a measure value $m$ saturated up to but possibly excluding a certain
position $i$ have a neat interpretation of the measure value at
position $i$: \medskip

\noindent\emph{Player \odd can force the following outcome of a play:
\begin{enumerate}
 \item priority $i$ appears $m_i$ times without any lower priority in between
 \item it will reach a $\top$-labelled vertex via priorities not more significant than $i$
 \item it enters a cycle with an odd dominating priority larger (less significant) than $i$.
\end{enumerate}
}\medskip

\noindent
Therefore, in the situation as described above, \odd can force a
play starting at $v$ to first go in one step to the successor
$u_{max}$ of $v$ with a measure of the form $(0,|V_1|, 0, |V_3|,
\dots, 0, |V_k|, ***)$, and then to play further and either force
a less significant odd-dominated cycle (cases 2 and 3, since $v$
is the only $\top$-labelled vertex), or to visit vertices with
priority $k$ $|V_k|$ times without any lower priority in between.
But in the latter case, since $v$ has priority $k$, we have in fact
$|V_k| + 1$ vertices with priority $k$ not ``cancelled'' by a lower
priority. Hence player \odd has forced an odd-dominated cycle with
the lowest (most significant) priority $k$.  Note that this does
not imply we can simply construct a winning strategy for $\odd$ by
always picking a successor with the maximal measure to further
vertices that can be visited along the play; such a method may lead
to an erroneous strategy, as illustrated by Figure \ref{fig:greedywrong}.

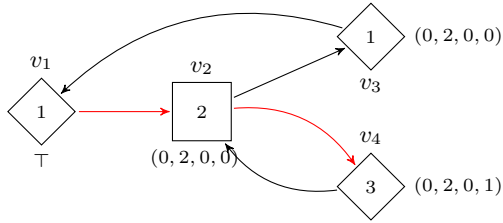
\begin{figure}[h!]
\centering
\begin{tikzpicture}[>=stealth']
\tikzstyle{every node}=[draw, inner sep=1pt, outer sep=1pt, node distance=40pt];
  \node[shape=diamond,minimum size=26,label=above:{$v_1$},label=below:{\scriptsize $\top$}] (y1)                            {\scriptsize 1};
  \node[shape=rectangle,minimum size=22,label=above:{$v_2$},label=below:{\scriptsize $(0,2,0,0)$~~~}] (y2) [right of=y1,xshift=20pt]  {\scriptsize 2};
  \node[shape=diamond,minimum size=26,label=below:{$v_3$},label=right:{\scriptsize $(0,2,0,0)$}] (y3) [above right of=y2,xshift=35pt]  {\scriptsize 1};
  \node[shape=diamond,minimum size=26,label=above:{$v_4$},label=right:{\scriptsize $(0,2,0,1)$}] (y4) [below right of=y2,xshift=35pt]  {\scriptsize 3};

\path[->]
  (y1) edge[color=red] (y2) 
  (y2) edge (y3) edge[bend left,color=red] (y4)
  (y3) edge[bend right] (y1)
  (y4) edge[bend left] (y2)
;
\pgfresetboundingbox
\path[use as bounding box] (-2,-1.4) rectangle (0.5*\textwidth,1.4);
\end{tikzpicture}
\caption{A game, won entirely by player $\odd$, and demonstrating
that a strategy defined by a greedy choice of vertex with the maximal 
tuple does not work. After lifting the vertices in order $v_1,v_3,v_2,v_4,v_1$, we obtain the measure values as above. Player \odd would then choose $v_2$, which leads to a losing play, whereas the choice of the other successor $(v_1)$ yields a winning play for \odd.}
\label{fig:greedywrong}
\end{figure}	

\ifreport
\else
Propagating a top value only to vertices with less significant
priorities is, however, safe. This can be achieved efficiently by
a slightly modified attractor that works within a given context of vertices $W$, 
which we call a \emph{guarded} attractor.
\begin{definition}
Let $k$ be some priority and $U,W$ some sets for which
$U \subseteq W \cap V_{\geq k}$. 
Then $\myattrW{\ge k}{\odd}{U}{W}$ is the least set $A$ satisfying:
\begin{enumerate}
 \item $ U \subseteq A \subseteq W \cap V_{\geq k}$
 \item 
 \begin{enumerate}
  \item if $u \in V_{\odd}$ and $\post{u} \cap A \neq \emptyset$, then $u \in A$
  \item if $u \in V_{\even}$ and $\post{u}\cap W \subseteq A$, then $u \in A$
 \end{enumerate}
\end{enumerate}
If $W = V$, we drop the subscript $W$ from the guarded attractor.
\end{definition}

\fi

\newcommand{\exttuples}[1]{\mathbb{M}^{#1,\textsf{ext}}}
\def\extprog{newProg}

\ifreport
\section{A bounded \odd dominion}
\label{sec:bounded}

In this section, we formally prove the key insight concerning the bounded $\odd$ dominion and the partial strategy construction, that will be later used in the strategy derivation algorithm. 

\subsection{Lifting History Graph}
\label{sec:lhg}

We first introduce an auxiliary notion of a Lifting History Graph. Its nodes (states) contain all snapshots of $\rho$-values that appeared at every parity game vertex in the course of the lifting, whereas the edges explain the causal dependency between $\rho$-values at a given vertex and its successors. In other words, the graph contains the entire history of lifting up to a certain point, and along its edges we can ``move back'' in the history of updates.

\begin{definition}{(Lifting History Graph)}
Suppose we are in the context of some partial execution of the SPM algorithm, in which $t$ liftings have been performed on a certain parity game $G=(V,E,\priosym,(V_{\even},V_{\odd}))$, starting with {\mbox{ $\rho_0 = \lambda w \in V.~(0, \dots, 0)$}} and yielding after each $i$-th lifting a temporary measure $\rho_i$. We define the corresponding \emph{Lifting History Graph} $LH=(V_{LH},E_{LH})$.
The set of nodes $V_{LH} \subseteq V \times \mathbb{M}^{\even}$ contains all pairs $(v,m)$ such that at some stage of lifting $v$ had a value $m$ (i.e. there is $i$ such that $\rho_i(v) = m$), and we define the edge relation as: 
\[
\begin{array}{lll}
\post{(v,m)} & \,\eqdef\, & \{(w,m') \mid w \in \post{v} \text{(in $G$)} \,\wedge\, \exists i \leq t: m = \rho_{i}(v) > \rho_{i-1}(v)\\
& & \text{and either } v \neq w \,\wedge\, \rho_{i-1}(w) = \rho_{i}(w) = m'\\
& & \text{or } v=w \,\wedge\, \rho_{i-1}(v) = m'\} 
\end{array}
\]
that is, the successors of $(v,m)$ in $LH$ are those pairs $(w,m')$ such that $w \in \post{v}$ and when $v$ was lifted to $m$, $\rho(w)$ had value $m'$. In other words,  $\post{(v,m)}$ constitutes a ``snapshot'' of $\rho$-values of $v$'s successors just before $v$ was lifted to $m$.
\end{definition}

The following technical proposition summarises how the $\rho-$values in the Lifting History Graph change as we move one step back in the history of depenendencies.

\def\mmin{m^{\min}}
\def\mmax{m^{\max}}
\def\plh{PLH}

\begin{proposition}
\label{prop:lhg-properties}
 Let $LH=(V_{LH},E_{LH})$ be a lifting history graph, and $(w,m) \in V_{LH}$ a position in $LH$ such that  $\post{(w,m)} \neq \emptyset$. Let us denote $\mmin$ and $\mmax$ respectively the minimal and maximal value of the set $\{m' \,\mid\, (u,m') \in \post{(w,m)}\}$.  
 
 \begin{enumerate}
  \item if $m \neq \top$, then 
  \begin{center}
  for all $i > \prio{w}$, $(m)_{i} = 0$ ~\hspace{5pt}(\plh.0) 
  \end{center} 
  
  and one of the following holds:
 \[ 
 \begin{array}{|l|l|lr|}
 \hline
 w \in V_{\even} ~&~ \prio{w} \text{ even} ~&~ \mmin =_{\prio{w}} m & ~\hspace{5pt}\text{(\plh.11)} \\
 \hline 
 w \in V_{\even} ~&~ \prio{w} \text{ odd} ~&~ \mmin =_{i-1} m \,\,\wedge\,\, (\mmin)_{i}=(m)_{i}\!-1 & \text{(\plh.12)}\\
 & & \wedge \text{ for all $j \in \{i+1, \dots, \prio{w}\}$ } (\mmin)_j = |V_{j}| &\\
 & & \text{where $i = \max{\{l \mid (m)_l > 0\}}$ } & \\
 \hline
 w \in V_{\odd} ~&~ \prio{w} \text{ even} ~&~ \mmax =_{\prio{w}} m & \text{(\plh.13)}\\
 \hline
 w \in V_{\odd} ~&~ \prio{w} \text{ odd} ~&~ \mmax =_{i-1} m \,\,\wedge\,\, (\mmax)_{i}=(m)_{i}\!-1 & \text{(\plh.14)}\\
 & & \wedge \text{ for all $j \in \{i+1, \dots, \prio{w}\}$ } (\mmax)_j = |V_{j}| &\\
 & & \text{where $i = \max{\{l \mid (m)_l > 0\}}$ } &\\
 
 \hline
 \end{array}
 \]
 
 \item if $m = \top$, then one of the following holds:
  \[ 
 \begin{array}{|l|l|lr|}
 \hline
 w \in V_{\even} ~&~ \prio{w} \text{ even} ~&~ \mmin = \top & ~\hspace{5pt}\text{(\plh.21)}\\
 \hline 
 w \in V_{\even} ~&~ \prio{w} \text{ odd} ~&~ \mmin = \top \text{or for all $i \leq \prio{w}$ }(\mmin)_i = |V_i| &\text{(\plh.22)}\\
 \hline
 w \in V_{\odd} ~&~ \prio{w} \text{ even} ~&~ \mmax = \top & \text{(\plh.23)}\\
 \hline
 w \in V_{\odd} ~&~ \prio{w} \text{ odd} ~&~ \mmax = \top \text{or for all $i \leq \prio{w}$ } (\mmax)_i = |V_i| & \text{(\plh.24)}\\
 \hline
 \end{array}
 \]
 \end{enumerate}
\end{proposition}

\begin{proof}
 Directly from the definitions of \progname~ and \liftname. Observe that the value $i$ in two subcases of the first part is well-defined, since $\post{(w,m)} \neq \emptyset$.
\end{proof}

Another important property of the Lifting History Graph is that the measure strictly decreases when a LH state with the same vertex is re-encountered along the path in $LH$. 

\begin{restatable}{proposition}{propositionlhgcycledecrease}
 \label{prop:lhg_cycle_decrease}
 If there is a non-trivial path in $LH$ (i.e. containing at least one edge) from $(w,m)$ to $(w,m')$, then $m > m'$.
\end{restatable}

\begin{proof}
 The property is easy to observe, since, intuitively, moving to a successor of $(w,m)$ entails moving to a time spot just before $w$ was lifted to $m$; when $w$ is encountered again, the corresponding snapshot $(w,m')$ comes from some earlier moment, and from the monotonicity of lifting we have $m>m'$ (for a more formal proof, see appendix).
 \qed
\end{proof}

We can use the lifting history graph to define a strategy of player \odd that witnesses some useful capabilities of player \odd: being able to force a certain number of vertices with priority $k$ to appear during the play with no lower (more significant) priority in between, or to force a winning play within a set of priorities bounded by $k$. We call this strategy a lifting history-based (LH-based) strategy $\sigma^{v}_{LH}$. Note that the strategy is not memoryless, and it is of theoretical importance only: its existence serves as a proof of certain properties from which we can in turn derive correctness of our algorithm. 

\def\invertex{v_0}
\def\inkval{kval}
\def\lhstrat{\sigma^{\invertex}_{LH}}

\subsubsection*{LH-based strategy} 

\paragraph{Convention} Throughout the entire section about LH-based strategy, we assume a parity game $G=(V,E,\priosym,(V_{\even},V_{\odd}))$ on which some sequence of liftings has been applied, yielding a temporary, not necessarily stable, measure $\rho$. We also assume a lifting history graph $LH = (V_{LH},E_{LH})$ associated with the aforementioned sequence of liftings performed on $G$.

In such a context, we define the \emph{lifting history-based strategy} $\lhstrat$: a memory-wise partial strategy of player \odd. For ease of presentation, we will present the definition of $\lhstrat$ using an on-line construction procedure (i.e. an algorithm selecting the desired strategy on-the-fly as the play progresses). The procedure utilises a lifting history graph, on which it performs moves in parallel to those in the play. Intuitively we move backwards along the history of updates (liftings) of the corresponding nodes. The measures are thus successively decreased\footnote{Strictly speaking, the measure values do not necessarily decrease with every single \emph{step} in the LH graph, but re-visiting a vertex in LH graph entails a decrease in measure - see Proposition \ref{prop:lhg_cycle_decrease}.}, until a useful (odd) cycle is encountered. We also keep track of the sequence of states in $LH$ visited so far. If the current node in the game is $w \in V_{\odd}$ and the corresponding current state in the lifting history graph is $(w,m)$, the strategy always picks the successor that had the maximal measure value when the current node $w$ was lifted to $m$. Moreover, whenever an odd-dominated cycle is encountered, we remove the entire corresponding suffix from the history and revert to the last state in the lifting history graph that contained $w$.


We proceed with a more formal description of the on-line strategy construction procedure $\oddresponse$, which starts a play at the inital vertex $\invertex$ and, depending on the ownership of the current node, either receives a choice of succesor of player \even, or generates such a choice for player \odd. The procedure maintains the following current state information:
\begin{itemize}
\item $(u,m)$: current state in $LH$, $u$ is the current vertex and $m$ one of its measure values from the lifting history. Initially
$(u,m) = (\invertex,\rho(\invertex))$, where $\rho(\invertex)$ is of the form $(0,|V_1|,0,,\dots,|V_{k-2}|,0,\inkval,***)$, $***$ denoting some arbitrary values.
\item $\lambda = \lambda_1 \dots \lambda_n \in V^{*}$: history of the play (in the parity game $G$) so far, excluding the current vertex, initialised to an empty sequence $\epsilon$
\item $\vis = \vis_1 \dots \vis_{\vislen} \in V_{LH}^{*}$: a sequence of states in $LH$ already visited, initialised to an empty sequence $\epsilon$
\end{itemize}

\def\termcond{\textsc{TermCond}}

$\oddresponse$ may be parametrised with a termination condition $\termcond$, and proceeds as follows:
  
\begin{enumerate}
 \item \label{line:term} If $m=(0,0,\dots,0)$, or $\termcond$ has been met, terminate.
 
 \item \label{line:checkcycle} If an odd-dominated cycle has been encountered, we prune $\vis$ accordingly. That is, if $(u,m') = \vis_j$ for some $j < \vislen$ (we have already visited $u$ and at that point it had a measure $m'$), and moreover the corresponding induced cycle $u = \lambda_i \dots \lambda_{n}. u$ in $G$ is odd-dominated, then we remove the suffix containing the cycle from $\vis$, i.e.  $\vis := \vis_1 \dots \vis_{j-1}.$ Moreover, we replace the current measure value with the previously encountered one, i.e. $(u,m) := (u,m')$.
 
 \item \label{line:update_hist} we update the history: $\lambda := \lambda.u$ and $\vis := \vis.(u,m)$
 
 \item \label{line:evenpicks}If $u \in V_{\even}$, then we receive an input from player \even who picks the next state $x$ from the successors of $u$. We set the current LH state  $(u,m) := (x,m_x)$ such that $(x,m_x) \in \post{(u,m)}$ in $LH$. 

 \item \label{line:oddpicks} Otherwise, if $u\in V_{\odd}$, then we define a choice for player \odd:
 
 $\sigma^{v}_{LH}(\lambda) := x : (x,m_x) \in \post{(u,m)}$ and $m_x$ is maximal within $\post{(u,m)}$ in $LH$. We set $(u,m) := (x,m_x)$.
 
\end{enumerate}

We will now prove several useful properties of the strategy obtained using the $\oddresponse$ procedure. We start with the following lemma concerning the class of strategies that we call oblivious w.r.t. odd-dominated cycles; this class contains the lifting-history based strategy.

\begin{lemma}
 \label{lem:trimming_strats}
 For every vertex $v$ and every (memorywise) strategy $\sigma \in \strategy{\odd}$, oblivious w.r.t. odd-dominated cycles, that is,
 
 \begin{quote}
  $\sigma(\lambda.v.\lambda'.v.\lambda'') = \sigma(\lambda.v.\lambda'')$ whenever $v.\lambda'.v$ is an odd-dominated cycle
 \end{quote}
 
 we have $(\min_{\pi \in \plays{\sigma}{v}}\, \profile{\even}{\pi})\, \in \eventups$.
\end{lemma}

\begin{proof} 
 Suppose $(\min_{\pi \in \plays{\sigma}{v}}\, \profile{\even}{\pi})\, \in \exteventups \setminus \eventups$. Take the play $\pi$ that realises the minimum. We have $(\profile{\even}{\pi})_k > |V_k|$ for some $k$, hence $\pi$ must contain a cycle in the prefix containing a vertex $w$ with $\prio{w}=k$ twice, and not preceded by a priority smaller than $k$. Let $\lambda. w. \lambda'. w$ be such a prefix of $\pi$, and let $w. \lambda'. w$ be a $k$-dominated cycle such that $\prio{w}=k$. Let $\tau$ be a strategy of $\even$ that combined with $\sigma$ yields $\pi$. If we define $\tau'$ as $\tau'(\lambda.w.\lambda'.w.\lambda'') = \tau(\lambda.w.\lambda'')$, and combine it with $\sigma$, we obtain a play $\pi'$ adhering to $\sigma$ such that $(\profile{\even}{\pi'})_k < (\profile{\even}{\pi})_k$, and $(\profile{\even}{\pi'})_i = (\profile{\even}{\pi})_i$ for all relevant $i \neq k$. Hence $\profile{\even}{\pi'} < \profile{\even}{\pi}$, a contradiction with $\pi$ being the play with minimal value among all plays consistent with $\sigma$.
 \qed
\end{proof}

The first key property of the lifting history-based strategy is that it forces every play to have value at least equal to the measure of the initial vertex.

\begin{proposition}
\label{prop:oddresponse_playvalue}
 For every partial play $\pi$ obtained using $\oddresponse$, it holds that $\profile{\even}{\pi} \geq \rho(\invertex)$, or $\pi$ meets $\termcond$.
 
 \end{proposition}

\begin{proof}	
Induction on the number of liftings. The base case $\rho(\invertex) = (0,0,\dots,0)$ is trivial. Suppose that the statement holds for all temporary measures $\rho'$ that occurred before $\rho$. 

Consider a play $\pi$ obtained using $\oddresponse$, and suppose it does not meet $\termcond$. From the way $\oddresponse$ is defined, in case $\rho(\invertex) > (0,0,\dots,0)$, $\pi$ is a sequence of at least two vertices. Let $\pi = \invertex. \pi'$, and $v_1$ be the first vertex of $\pi'$; moreover, let us define 
$m^0 = \rho(\invertex)$, and $m^1 = \rho_{prev}(v_1)$. We need to show that $\profile{\even}{\pi} \geq m^0$.

\mycomment{devise a consistent notation for $\rho$-values, which are in general different for every position/vertex, e.g. $(v_i,\rho_i)$}
 
If $\prio{\invertex}$ is even, then from Proposition \ref{prop:lhg-properties} and the way $\oddresponse$ is defined, we have $m^1 \geq_{\prio{\invertex}} m^0$, which entails $m^1 \geq m^0$ ($(m^0)_i =0$ for $i>\prio{\invertex}$). From IH we know that the value of the suffix $\pi'$ satisfies $\profile{\even}{\pi'} \geq m^1 \geq m^0$. Moreover, since $\profile{\even}{\pi} =_{\prio{\invertex}} \profile{\even}{\pi'}$, and $(m^0)_i =0$ for $i>\prio{\invertex}$, we have $\profile{\even}{\pi} \geq m^0$. 

Suppose now that $\prio{\invertex}$ is odd. We can restrict ourselves to the situation when $m^1 < m^0$, as the case when $m^1 = m^0$ can be proved exactly as above. 

Let $i$ be the largest position such that $(m^0)_i > 0 $. We consider two subcases.
 
 \mycomment{check if comparing $m^0$ and $m^1$ indicates whether carrying occurred - should be ok...}
 
 If $i=\prio{\invertex}$, then $m^0 =_{\prio{\invertex}-1} m^1$, and from $m^1 < m^0$ and Proposition \ref{prop:lhg-properties} we have $(m^0)_{\prio{\invertex}} = (m^1)_{\prio{\invertex}}+1$. On the other hand, we have 
 $\profile{\even}{\pi} =_{\prio{\invertex}-1} = \profile{\even}{\pi'}$, and 
 $(\profile{\even}{\pi})_{\prio{\invertex}} = \profile{\even}{\pi'}_{\prio{\invertex}}+1$ ($\invertex$ contributes one more occurrence of priority $\prio{\invertex}$ to the play value of $\pi$, as compared to $\pi'$). Combining this with IH ($\profile{\even}{\pi'} \geq m^1$), we obtain $\profile{\even}{\pi} \geq m^0$.
 
 If $i<\prio{\invertex}$ (carrying), 
 then it must be the case that $(m^1)_{j} = |V_{j}|$ for all $i<j\leq \prio{\invertex}$.  First let us make an obvious observation that if 
 $\profile{\even}{\pi'} \geq_{\prio{\invertex}-1} m^0$, then $\profile{\even}{\pi} \geq m^0$. 
 \mycomment{since $(m^0)_j=0$ for $j>i$ (?)} 
 
 What thus remains to analyse is the case when $\profile{\even}{\pi'} <_{\prio{\invertex}-1} m^0$. We have then $(\profile{\even}{\pi})_{\prio{\invertex}} > |V_{\prio{\invertex}}|$. At this point, since $\oddresponse$ is oblivious to odd-dominated cycles, we can use Lemma \ref{lem:trimming_strats} -- we know that the minimal play value among all plays obtained with $\oddresponse$ $m_{min} = \min_{\pi'' \in \plays{\sigma_{LH}}{\invertex}}\, \profile{\even}{\pi''} \,\in \eventups$; we also know that 
 $m_{min} \geq m''$, where $m'' =_{\prio{\invertex}-1} m_1$, and $(m'')_{\prio{\invertex}} = |V_{\prio{\invertex}}|+1$. But since in $m_1$ all positions from $i+1$ to $\prio{\invertex}$ are saturated, then for the minimal such value it must hold that $m_{min} >_{i} m_1 (\geq_{i} m_0)$, and hence $\profile{\even}{\pi} \geq m_{\min} > m^0$ \mycomment{reasoning ok, double check details of explanation}.
\qed
\end{proof}

The second key property states that if for some odd priority $k$ we have seen a $k$-dominated stretch of a lesser degree than $(\rho(\invertex))_k$, then the first vertex of priority smaller than $k$ has a measure strictly larger ``up-to-$k$'' than $\rho(\invertex)$.

\begin{proposition}
\label{prop:oddresponse_greater_meausure}
 Every partial play $\pi$ obtained using $\oddresponse$ has the following property: for every relevant odd position $k$, if there is a prefix of $\pi$ of the form $\lambda. v$, where $(\theta(\lambda))_k < (\rho(\invertex))_k$ and $k' = \prio{v} < k$, and moreover if $\lambda. v$ is the smallest prefix with this property, then $\rho(v) >_{k'} \rho(\invertex)$. 
\end{proposition}

\begin{proof}
Induction on the number of liftings. The base case $\rho(\invertex) = (0,0,\dots,0)$ is trivial. Suppose that the statement holds for all temporary measures $\rho'$ that occurred before $\rho$. 

Suppose that there is a prefix of $\pi$ of the form $\lambda. v$, where $(\theta(\lambda))_k < (\rho(\invertex))_k$ and $k' = \prio{v} < k$, and moreover $\lambda. v$ is the smallest prefix with this property. We need to show that $\rho(v) >_{k'} \rho(\invertex)$. 
 
 Let us observe first that $v \neq \invertex$; indeed, from our assumptions it follows that $(\rho(\invertex))_k > 0$, and on the other hand since $\prio{v} < k$, we have $(\rho(v))_k =0$.
 
 Let $v_1$ be the vertex appearing right after $\invertex$ in $\pi$; let us denote with $m^0$ and $m^1$ the values $\rho(\invertex)$ and $\rho(v_1)$, respectively. The suffix of $\lambda$ excluding the first element $v$ will be denoted with $\lambda'$. 
 
 Firstly, observe that if $k' \geq i$ (where $i$ is the least significant position in $\rho(\invertex)$ such that $(\rho(\invertex))_i \neq 0$), then since for all $j>i$ $(\rho(\invertex))_j=0$, and $i \leq k' < k$, we obtain a contradiction with $(\rho(\invertex))_k > 0$. We can therefore restrict ourselves to the case when $k' < i$. 
 
 We consider two cases:
 
 \begin{itemize}
  \item $v = v_1$: From Proposition \ref{prop:lhg-properties} we have $m^1 \geq_{i-1} m^0$; in particular, $m^1 \geq_{k'} m^0$. Suppose, towards contradiction, that $m^1 =_{k'} m^0$. Since $(m^0)_i >0$, and for all $j>k'=\prio{v_1}$ $(m^1)_j = 0$, we have then $m^0 > m^1$. This from Proposition \ref{prop:lhg-properties} excludes the case when $\prio{\invertex}$ is even; in fact, the only possible case is the one in which $i=\prio{\invertex}$ and $(m^0)_i = 1$ with $i$ being the only position in $m^0$ less significant than $k'$ with a non-zero value. But then $i$ is the only possible candidate for $k$, and on the other hand $(\profile{\even}{\invertex.v})_{k} = 1 = \rho(\invertex)$, a contradiction. \mycomment{ok, add a more in-depth explanation in the last part...}
  
  \item $v \neq v_1$: We need to consider several subcases; our goal is to show that in every plausible scenario the IH can be applied to our advantage thanks to some intermediate vertex $v_{I}$ in $\lambda$, occurring between $\invertex$ and $v$, with $\rho(v_I) = m^I$, and choosing some appropriate $k_I$, $k_I > k'$. We can derive the desired property from IH whenever $m^I \geq_{k'} m^0$, and between $v_I$ and $v$ the degree of $k_I$-dominated stretch is smaller than $(m^I)_{k_I}$. Indeed, in such a case we obtain from IH 
  $\rho(v) >_{k'} m^I \geq_{k'} m^0$. 
  
  \paragraph{Observation 1} If there occurs an intermediate node ($w$,$m^w$) between $\invertex$ and $v$ such that $(m^w)_k = |V_k|$, then IH can be applied, and $\rho(v) >_{k'} m^w $.
  
  \paragraph{Proof} From our initial assumptions, we know that $(\profile{\even}{\lambda.v})_k < (m^0)_k$. Since no vertex with priority smaller than $k$ occurs in $\lambda$, for every subplay $\lambda''$ of $\lambda$ we have $(\profile{\even}{\lambda''.v})_k < (m^0)_k$. Consider $\lambda^w$, the suffix of $\lambda$ starting in $w$; we have $(\profile{\even}{\lambda^w.v})_k < (m^0)_k \leq |V_k| = (m^w)_k$. Hence we can apply IH, and obtain $\rho(v) >_{k'} m^w $. \qed
  
  \paragraph{Observation 2} If there occurs an intermediate node ($w$,$m^w$) between $\invertex$ and $v$ such that $(m^w) \geq_{k} m^0$, then $\rho(v) >_{k'} m^0$.
  
  \paragraph{Proof} If $(m^w)_k \geq (m^0)_k$, then IH can be applied immediately \mycomment{ok, explain}. Otherwise we have 
  $m^w >_{k-1} m^0$. If $(m^w)_j > 0$ for any $j$ strictly between $k'$ and $k$, then IH can be applied, since no such priority $j$ can occur until $v$. Otherwise we have $m^w >_{k'} m^0$.
  
  Consider a suffix of $\lambda$ starting at $w$, and let us denote the ensuing vertices with $w = w_0, w_1, \dots, w_N, w_{N+1} = v$. For every $j \in \{0,\dots,N\}$ let us denote with $i_{j}$ the largest (least significant) position such that $\rho(w_j) > 0$. From Proposition \ref{prop:lhg-properties} we know that $\rho(w_j) \geq_{i_j} \rho(w_{j+1})$ for all $j \in \{0,\dots,N\}$. 
  
  Suppose towards contradiction that $\rho(v) \leq_{k'} m^0$. Then $\rho(v) <_{k'} m^w$, and from the above observations there must be some $i_j$ such that $i_j \leq k'$ (otherwise we would have a sequence of inequalities  $\rho(w) = \rho(w_0) \leq_{k'} \rho(w_1) \leq_{k'} \dots \leq_{k'} \rho(w_{N+1}) = \rho(v)$). Since $\prio{w_{i_j}} \geq k$, it must be the case of carrying, and in $\rho(w_{i_j+1})$ all odd positions between $i_j$ and $k$ must be saturated. This excludes $i_j=N$, since $(\rho(w_{N+1}))_l = 0$ for all $l > k'$. Since we have $(\rho(w_{i_j+1}))_k = |V_k|$, we can use IH thanks to Observation 1, and obtain $\rho(v) >_{k'} \rho(w) \geq_{k'} m^0$, from which we derive $\rho(v) >_{k'} m^0$, a contradiction. \qed
  
  We proceed to prove the main statement.
  
  If $k < i$, then $m^1 \geq_{k} m^0$, and the statement follows immediately from Observation 2.  
  
  If $k=i$: if $i=\prio{\invertex}$, then the play value at position $k$ increases by $1$ as compared to $\pi'$; formally $(\profile{\even}{\lambda'.v})_k = (\profile{\even}{\lambda.v})_k-1$. 
  
  If $m^1 \geq_{k} m^0$, then the statement follows from Observation 2. If not, then due to Proposition \ref{prop:lhg-properties}, we have $(m^1)_k = (m^0)_k-1$. We also have 
  $(\profile{\even}{\lambda.v})_k < (m^0)_k$, and from the previous observations we obtain 
  $(\profile{\even}{\lambda'.v})_k < (m^1)_k$, hence IH can be applied with $v_I = v_1, k_I = k$. 
  
  If $k=i$ and $i<\prio{\invertex}$ (carrying), then for all odd $j$ such that $i<j\leq \prio{\invertex}$, we have $(m^1)_{j} = |V_j|$. Consider position $\prio{\invertex}$.

  If $(\profile{\even}{\lambda'.v})_{\prio{\invertex}} < |V_{\prio{\invertex}}|$, then let $w$ be the first vertex such that $\prio{w} < \prio{\invertex}$. From IH we know that $\rho(w) >_{\prio{w}} m^1$. If $w=v$, then 
  $\rho(v) >_{k'} m^1 \geq_{k'} m^0$ (the last inequality holds because $k' < k = i$, and $m^1 \geq_{i-1} m^0$). If $w \neq v$, then, since all positions from $\prio{\invertex}$ up to $k$ are saturated in $m^1$,
  in order to have $\rho(w) >_{\prio{w}} m^1$ it must hold that $\rho(w) >_{k} m^1$, thus
  $\rho(w) \geq_{k} m^0$, and we can apply IH thanks to Observation 2.
  
  If $(\profile{\even}{\lambda'.v})_{\prio{\invertex}} \geq |V_{\prio{\invertex}}|$, then $\lambda$ contains a cycle which gives rise to a $\prio{\invertex}$-dominated stretch of degree at least $|V_{\prio{\invertex}}|+1$. We can then consider a different play that can be obtained with $\oddresponse$, and which does not contain any $\prio{\invertex}$-dominated cycle until $v$ occurs. This gives rise to the previous scenario, and it follows again that $\rho(v) >_{k'} m^0$.
  
 \mycomment{better explanation of the last case needed}
 \end{itemize}
 \qed

\end{proof}
 
If we consider a strategy generated by $\oddresponse$ in which $\termcond$ has been instantiated with 
``a $\top$-labelled vertex has been encountered'', and applied to a vertex whose measure is saturated up-to-$k$, we obtain as a corollary of the above:

\begin{proposition}
\label{prop:lhsbased_strategy_correct}
Assume a parity game $G=(V,E,\priosym,(V_{\even},V_{\odd}))$ on which a sequence of liftings has been applied, resulting in a temporary measure $\rho$. Let $k$ be an \emph{odd} number and let $\invertex$ be a vertex such that for all odd $i<k$, $(\rho(\invertex))_i = |V_i|$, and 
$(\rho(\invertex))_k = \inkval$. 

There exists a strategy $\lhstrat$ of player \odd that guarantees the following objective: for all plays $\pi$ starting at $\invertex$ and conforming to $\lhstrat$, either of the three holds:
 \begin{enumerate}
  \item $\pi$ has a finite prefix that is a $k$-dominated stretch of degree $\inkval$ ($\inkval$ occurrences of vertices with priority $k$)
  \item $\pi$ has a finite prefix that contains only vertices in $V_{\geq k}$, and terminates in a vertex $v$ such that $\rho(v) = \top$
  \item  $\pi$ is infinite, winning for \odd, and contains only vertices in in $V_{\geq k}$ 
 \end{enumerate}
 \end{proposition}
 
\begin{proof}  

Consider $\oddresponse$ with $\termcond$=``a $\top$-labelled vertex has been encountered''. Firstly, observe that the only case when a vertex with priority smaller than $k$ can be encountered, is at the moment when $\oddresponse$ terminates due to an occurrence of a $\top$-labelled vertex. Indeed, as a consequence of the second point of Proposition \ref{prop:oddresponse_greater_meausure}, if such a vertex $w$ occurs, we have $\rho(w) >_{\prio{w}} \rho(\invertex)$, which implies $\rho(w) = \top$ due to $\rho(\invertex)$ being saturated up-to-$k$. 

Suppose that $\termcond$ never held while executing $\oddresponse$. Then due to Proposition \ref{prop:oddresponse_playvalue}, we have $\profile{\even}{\pi} \geq \rho(\invertex)$. If $\profile{\even}{\pi} =_{k-1} \rho(\invertex)$, then $(\profile{\even}{\pi})_{k} \geq (\rho(\invertex))_k = \inkval$, and $\inkval$ vertices of priority $k$ must have been visited. If $\profile{\even}{\pi} >_{k-1} \rho(\invertex)$, then since $\rho(\invertex)$ is saturated up-to-$k$, we have $\profile{\even}{\pi} = \top$, which in turn implies that $\pi$ is an infinite game won by $\odd$ (and staying within $V_{\geq k}$, as shown in the first part of the proof).
\qed
\end{proof}

\begin{corollary}
\label{corr:lhbased_strategy_for_top}
 Assume a parity game $G=(V,E,\priosym,(V_{\even},V_{\odd}))$ on which a sequence of liftings has been applied, resulting in a temporary measure $\rho$ such that $\rho(v) = \top$ for some vertex $v \in V$. There exists a strategy $\sigma$ of player $\odd$ such that for every $\pi \in \plays{v}{\sigma}$ either:
 \begin{enumerate}
  \item $\pi$ visits a vertex $v_{\top}$ such that $\rho(v_{\top}) = \top$; moreover, before visiting $v_{\top}$, only priorities larger or equal to $k$ are encountered
  \item $\pi$ is winning for \odd, and visits only vertices within $V_{ \geq k}$
 \end{enumerate}
 

Moreover, for every successor $u$ of $v$ with a maximal measure among $\post{v}$, there is a strategy $\sigma_u$ with the above properties, $\sigma_u(v)=u$.
\end{corollary}

\begin{proof}

Consider an arbitrary successor $u$ of $v$ with a maximal measure among $\post{v}$; let us define $\sigma$ as  $\sigma_u(v)=u$; for other relevant histories we use the strategy $\sigma^u_{LH}$ from Proposition \ref{prop:lhsbased_strategy_correct}, with the difference that we do not terminate when case 2 has been reached ($(\rho(u))_k$ vertices of priority $k$ visited).

Take any play $\pi \in \plays{v}{\sigma_u}$ and suppose it does not reach any top-labelled vertex (the first case is excluded). According to Proposition \ref{prop:lhsbased_strategy_correct}, there remain two possibilities. Suppose we are in the second case from Proposition \ref{prop:lhsbased_strategy_correct}, i.e.
a $k$-stretch of degree $(\rho(u))_k$ has been visited. Since $\rho(u)$ is saturated up to and including $k$, there were in fact $|V_k|+1$ vertices of priority $k$ visited, including the initial vertex $v$. This means that a certain vertex $w$ has been re-visited by passing through a $k$-dominated cycle. Instead of terminating at this point, we continue the play according to the same lifting history-based strategy, which is oblivious w.r.t. odd-dominated cycles, and stays within $V_{\geq k}$. \mycomment{the last part: state the argument more precisely }

Finally, it is not difficult to observe that if there is a strategy which forces an objective consisting of a disjunction of a winning condition for one of the players, and a reachability objective, then there is a memoryless strategy that guarantees the same objective (one can formally prove this, for instance, using a
straightforward conversion to a winning condition in a parity game).

\qed
 
\end{proof} 
 
\subsection{Existence of the bounded \odd dominion and a partial strategy assignment}
 
\begin{corollary}
\label{cor:v_picks_max_always}
Assume a parity game $G=(V,E,\priosym,(V_{\even},V_{\odd}))$ on which a sequence of liftings has been applied, resulting in a temporary measure $\rho$ such that there is exactly one vertex $v$ with $\rho(v) = \top$. Let $k = \prio{v}$.
  \begin{itemize}
   \item if $v \in V_{\odd}$, then for every successor $u$ of $v$ with a maximal measure among $\post{v}$ there is an \odd-dominion $D_u$ such that for all $w \in D_u$, $\prio{w} \geq k$. Moreover, there is an \odd strategy $\sigma$ winning for \odd, closed on $D_u$, and defined on $v$ as $\sigma(v) = u$
   \item if $v \in V_{\even}$, then there is an \odd-dominion $D$ such that $v \in D$ and for all $w \in D$, $\prio{w} \geq k$. Note that in this case it must hold that $\post{v} \subseteq D$. 
   \end{itemize} 
\end{corollary}
 
 \begin{proof}
 
 Observe that the all assumptions of Corollary \ref{corr:lhbased_strategy_for_top} hold, and from there we know that for every successor $u$ of $v$ with the maximal measure, there exists a memoryless strategy $\sigma_u$ that either yields an infinite play within priorities larger or equal $k$, or will visit a top-labelled vertex. In the latter case, $v$ must be re-visited, as it is the only top-labelled vertex. Hence the desired strategy $\sigma'_u$ for player \odd is the same as  $\sigma_u$, with the difference that if $v$ is re-visited, the history is treated as if it has been reset.
 
 The postulated dominion $D_u$ is the set of all vertices that can occur in plays conforming to $\sigma'_u$. They constitute a dominion of \odd on which no vertex has a priority exceeding $k$.
 \qed
\end{proof}

The above observations are important from an algorithmic perspective, because they allow us to set the strategy of player \odd on the first node $v$ lifted to top while executing the SPM. In fact at this stage we may be able to set an \odd strategy for even more nodes, following a reasoning similar to that in Zielonka's recursive algorithm -- by using a strategy with which \odd can ``attract'' the play from other nodes to $v$. However, to retain soundness, we use a special guarded attractor $\myattr{\geq k}{\odd}{\{v\}}$, which can pass only through nodes of priority not more significant than $k$.

The definition of the guarded attractor given below may be parameterised with a subset of vertices $W \subseteq V$, if we wish 	to consider only part of
the game (in the remainder of the paper, we always use as $W$ a set of vertices inducing a well-defined subgame $G \cap W$). 

If we assume $U \subseteq W \cap V_{\geq k}$, then $\myattrW{\ge k}{\odd}{U}{W}$ is the least set $A$ satisfying:
\begin{enumerate}
 \item $ U \subseteq A \subseteq W \cap V_{\geq k}$
 \item 
 \begin{enumerate}
  \item if $u \in V_{\odd}$ and $\post{u} \cap A \neq \emptyset$, then $u \in A$
  \item if $u \in V_{\even}$ and $\post{u}\cap W \subseteq A$, then $u \in A$
 \end{enumerate}
\end{enumerate}

\newcommand{\stratcomp}[2]{#1 \triangleright #2}

Let $\sigma_1$ and $\sigma_2$ be two strategies of the same player $\player$. By $\stratcomp{\sigma_1}{\sigma_2}$ we will denote a strategy of player $\player$ defined on $\domain{\sigma_1} \cup \domain{\sigma_2}$ as $\sigma_1(w)$ for all $w \in \domain{\sigma_1}$, and as $\sigma_2(w)$ for all $w \in \domain{\sigma_2} \setminus \domain{\sigma_1}$.

\begin{lemma}
\label{lem:extend_strat_with_attr}
Let $D \subseteq G$ be any dominion of \odd in $G$ and $k$ an odd number such that all vertices in $D$ have priority at least $k$, $v \in D$ such that $\prio{v}=k$ and $\sigma_D$ be a winning strategy for \odd on $D$ and closed on $D$. Let $\sigma^{ \geq k}_{Attr}$ be a strategy defined on all vertices in the attractor $\myattr{\geq k}{\odd}{\{v\}} \setminus \{v\}$ as the strategy attracting towards $v$. Then $\stratcomp{\sigma^{ \geq k}_{Attr}}{\sigma_D}$ defined on $D \cup \myattr{\geq k}{\odd}{\{v\}}$ is winning for \odd and only visits priorities greater than or equal to $k$. 
 \end{lemma}
 
\begin{proof}
  Consider an arbitrary infinite play $\pi$, conforming to $\stratcomp{\sigma^{ \geq k}_{Attr}}{\sigma_D}$. If $\pi$ visits $\myattr{\geq k}{\odd}{\{v\}}$ infinitely often, then from the construction of $\stratcomp{\sigma^{ \geq k}_{Attr}}{\sigma_D}$, it will visit $v$ infinitely often, and from the assumption about $D$ the lowest priority in $\pi$ is $k$, hence $\pi$ is winning for $\odd$. Otherwise, $\pi$ has a suffix that stays within $D$, on which it conforms to $\sigma_D$, and therefore is winning for $\odd$ as well. 
\end{proof}
 
Finally, as an immediate consequence of Corollary \ref{cor:v_picks_max_always} and Lemma \ref{lem:extend_strat_with_attr}, we obtain the main result of this section. 
\fi
\noindent
The theorem below forms the basis of our algorithm; it describes
the relevant information about an \odd-dominion that can be extracted
once the first vertex in the game is lifted to top.

\begin{restatable}{theorem}{thmcentraal}
 \label{thm:centraal}
  Let $G$ be a parity game on which an arbitrary lifting sequence is applied, such that at some point a vertex $v$ with $\prio{v}=k$ is the first vertex whose measure value becomes top. Let $\rho$ be the temporary measure at that point. The following holds:
  
  \begin{itemize}
   \item if $v \in V_{\odd}$, then for every successor $u$ of $v$ with a maximal measure among $\post{v}$ there is an \odd-dominion $D_u$ containing $\myattr{\geq k}{\odd}{\{v\}}$ such that for all $w \in D_u$, $\prio{w} \geq k$. Moreover, $\odd$ has a winning strategy that is closed on $D_u$, defined on $v$ as $\sigma(v) = u$, and on $\myattr{\geq k}{\odd}{\{v\}}  \setminus \{v\}$ as the strategy attracting towards $v$
   \item if $v \in V_{\even}$, then there is an \odd-dominion $D$ containing $\myattr{\geq k}{\odd}{\{v\}}$ such that for all $w \in D$, $\prio{w} \geq k$. Moreover,  \odd has a winning strategy $\sigma$ that is closed on $D$, and defined on $\myattr{\geq k}{\odd}{\{v\}}  \setminus \{v\}$ as the strategy attracting towards $v$.  Note that in this case $\post{v} \subseteq D$. 
   \end{itemize}     
     
 \end{restatable} 

\ifreport 
\section{The extended SPM algorithm}
\else
\subsection{The Extended SPM Algorithm}
\fi

\def\res{\textsf{RES}}
\def\rem{\textsf{REM}}
\def\irr{\textsf{IRR}}
\def\dom{\textsf{DOM}}
\def\newspm{\textsf{SPM-Within}}

\begin{algorithm}[h!t]
\caption{Modified SPM with strategy derivation for player Odd}
\label{alg:newspm}
\begin{algorithmic}[1]
\Function{Solve}{$G$}
\State \emph{\textbf{Input} $G = (V, E, \priosym, (V_\even, V_\odd))$}
\State \emph{\textbf{Output} Winning partition and strategies $((\winsubeven{G},\sigma'),(\winsubodd{G},\sigma))$}
\State initialise $\sigma$ to an empty assignment
\State $\rho  \gets \lambda w \in V.~(0, \dots, 0)$
\State $\Call{\newspm}{V}$
\State compute strategy $\sigma'$ for player Even by picking min. successor w.r.t. $\rho$
\State \Return $((\{v \in V ~|~ \rho(v) \neq \top\},\sigma'),(\{v \in V ~|~ \rho(v) = \top,\sigma))$

\State

\Procedure{\newspm}{$W$}  

\While{$(W \neq \emptyset)$} \label{line:outerwhile}
\While{$\rho \sqsubset \lift{\rho}{w} \text{ for some $w \in W$ and for all $w \in W$:} \rho(w) \neq \top$} \label{line:liftloop_start}
\State $\rho \gets \lift{\rho}{w}$ for $w \in W$ such that $\rho \sqsubset \lift{\rho}{w}$
\EndWhile \label{line:liftloop_end}
\State \textbf{if} {$\text{for all $w \in W$: } \rho(w) \neq \top$} \textbf{return} $\rho$ \textbf{end if}

\State \label{line:firsttop_v} $v \gets $ the (unique) vertex in $W$ such that $\rho(v) = \top$
\State $k \gets \prio{v}$
\State \label{line:firsttop_strat} $\sigma(v) \gets u$ where $u \in v^\bullet \cap W$ for which $\rho(u') \leq_{k} \rho(u)$ for all $u' \in v^\bullet \cap W$
\State $\res \gets \myattrW{\geq k}{\odd}{\{v\}}{W}$
\For{ \textbf{all} $w \in \res \setminus \{v\}$}
\State $\rho(w) \gets \top$
\State \textbf{if} $w \in V_{\odd}$ \textbf{then} assign $\sigma(w)$ the strategy \emph{attracting to $v$} \textbf{end if} 
\EndFor \label{line:defsigma_res_end}
\State $\dom \gets \res$
  \State \label{line:compute_irr} $\irr \gets \attrW{\even}{\{ w \in W \,\mid\, \prio{w} < k\}}{W}$
  \State \label{line:rem} $\rem \gets W \setminus (\res \cup \irr)$
  \State \label{line:reccall} $\Call{\newspm}{\rem}$  
  \State $\dom \gets \dom \cup \{ w \in \rem \,\mid\, \rho(w) = \top \}$ \label{line:dominion_resolved}
\State $A \gets \attrW{\odd}{\dom}{W}$

  \For{ \textbf{all} $w \in A \setminus \dom$}
  \State $\rho(w) \gets \top$
    \If{$w \in V_{\odd}$} 
    assign $\sigma(w)$ to be the strategy  \emph{attracting} to $\dom$ 
    \EndIf
  \EndFor
\State \label{line:smallerW} $W \gets W \setminus A$
\EndWhile
\EndProcedure
\EndFunction

\end{algorithmic}
\end{algorithm}

\newcommand{\genlift}[3]{\ensuremath{\text{lift}_{#1}(#2,#3)}}
\def\genliftname{\text{lift}_{W}}

Theorem \ref{thm:centraal} captures the core idea of our algorithm.
It provides us with the means to locally resolve (\ie define a local
strategy for) at least one vertex in $\winsubodd{G}$, once a top value
is found while lifting. Moreover, it indicates in which part of
the game the remainder of the \odd-dominion resides, implying
that one can temporarily restrict the lifting to that area until
the dominion is fully resolved. \ifreport There is still a non-trivial task
ahead: how to proceed such that the composition of all local strategy
assignments will be globally valid.
We will give a description of our solution (Algorithm \ref{alg:newspm}), and
informally argue the correctness of our approach. 
The formal correctness proof can be found in section \ref{sec:correctness}.
\else
We will give a description of our solution (Algorithm \ref{alg:newspm}), and
informally argue the correctness of our approach.  For a (intricate and rather involved) formal proof, we
refer to~\cite{GW:14}.
\fi

The algorithm proceeds as follows. First, a standard SPM is run until the first vertex reaches top [l. \ref{line:liftloop_start}--\ref{line:liftloop_end} in Alg. \ref{alg:newspm}\,]. Whenever $v$ is the first vertex lifted to top, then the issue of the winning strategy for $v$ can be resolved immediately [l. \ref{line:firsttop_strat}\,], as well as for vertices in the `at-least-$k$' attractor of $v$ (if there are any). We will denote this set of `resolved' vertices with $\res$. Moreover, we can restrict our search for the remainder of the \odd-dominion $D$ only to vertices with priorities not more significant than $k$, in fact only those from which player \even cannot attract a play to visit a priority more significant than $k$. Hence we can remove from the current computation context the set $\irr = \myattr{}{\even}{\{ w \in W \,\mid\, \prio{w} < k\}}$, vertices that may be considered at the moment irrelevant [l. \ref{line:compute_irr}--\ref{line:rem}\,].

After discarding the resolved and currently irrelevant vertices, the algorithm proceeds in the remaining set of vertices that constitutes a proper subgame (i.e. without dead ends) induced by the set $\rem$. After the subroutine returns [l. \ref{line:reccall}\,], all vertices labelled with top are won by player \odd in the subgame $G \cap \rem$. In other words, those vertices are won by \odd provided that the play does not leave $\rem$. Since the only way for player \even to escape from $\rem$ is to visit $\res$, where every vertex is won by player \odd, the top-labelled vertices from $\rem$ are in fact won by \odd in the context of the larger game $G \cap W$. Therefore the set $\dom$ computed in line \ref{line:dominion_resolved} constitutes an \odd-dominion within the game $G \cap W$, in which we have moreover fully defined a winning strategy $\sigma$ for player \odd. Finally, every vertex from $V \setminus \dom$ that can be attracted by player \odd to the dominion $\dom$ is certainly won by \odd, and for those vertices we assign the standard strategy attracting to $\dom$. The \odd-dominion $A$ is then removed, and the computation continues in the remaining subgame. 

The algorithm may at first sight appear to deviate much from the
standard SPM algorithm. However, the additional overlay, apart from
defining the strategy, can be seen as a special lifting policy that
temporarily restricts the lifting to parts where an \odd dominion
resides. 

\ifreport

\subsection{Correctness of the modified algorithm}
\label{sec:correctness}

The core of the correctness proof consists of showing that upon return procedure $\newspm$ computes the winning strategy for player \odd for all vertices in $\winsubodd{G \cap W}$, provided that the input set $W$ meets certain guards, which we call \emph{suitability} conditions. 

We need to define a few notions first. Let $A \subseteq V$ in the lifting context $\tuple{G,\rho}$. We will say that $A$ \emph{has only nonprofitable \odd-escapes with respect to $\rho$} if for every $w \to u$ such that $w \in A \cap V_{\odd}$ and $u \in V \setminus A$, it holds that   $u \in \myattrW{}{\even}{\{ w' \in V \,\mid\, \prio{w'} < \minprio{A}\}}{\nontop{\rho}} $, where $\nontop{\rho} = \{w' \in V \mid \rho(w') \neq \top\}$.

Moreover, we will say that $A$ \emph{has only top \even-escapes with respect to $\rho$}, if for every $w \to u$ such that $w \in A \cap V_{\even}$ and $u \in V \setminus A$, it holds that $\rho(u)=\top$.

We will call a subset $W \subseteq V$ \emph{suitable} w.r.t $\rho$ if:

\begin{itemize}
   \item[S1] for all $w \in W$, $\rho(w) \neq \top$
   \item[S2] $G \cap W$ is a proper sugbame
   \item[S3] $W$ has only nonprofitable \odd-escapes w.r.t. $\rho$
   \item[S4] $W$ has only top \even-escapes w.r.t. $\rho$
\end{itemize}

Theorem \ref{thm:centraal} captures the main idea behind our algorithm, and for the sake of understanding and readability it was singled out in a simplified form, as compared to the version that is formally required to prove correctness of the algorithm. The latter version is the following generalisation of Theorem \ref{thm:centraal}, that allows us to reason about a context in which possibly more than one top value has occurred in the course of the lifting and  certain parts of the game have already been resolved.

\def\oldrho{\bar{\rho}}
\def\rhomaxout{\rho_{max-out}}

\begin{restatable}{theorem}{thmcentraalgeneralised}
 \label{thm:centraal-generalised}
 Suppose that $W \subseteq V$ induces a proper subgame and $W$ has only nonprofitable \odd-escapes. Let $\rho$ be a measure corresponding to a lifting sequence in which $v \in W$ is the only vertex in $W$ such that $\rho(v) =\top$, and $v$ was the last lifted vertex. Let $k = \prio{v}$. Then there is a (memoryless) strategy $\sigma$, winning for \odd in the context of the subgame $G \cap W$, such that all plays conforming to $\sigma$ visit only vertices with priorities not smaller than $k$. Moreover, if $v \in V_{\odd}$, then $\sigma(v)$ is (one of the) maximal successor(s) of $v$ w.r.t. $\rho$.
 
 \myomit{
 Moreover, on $\myattrW{\geq k}{\odd}{\{v\}}{W} \cap V_{\odd}$, $\sigma$ is defined as:
 \begin{itemize}
  \item for all $w \in \myattrW{\geq k}{\odd}{\{v\}}{W} \setminus \{v\}$, $\sigma$ is the corresponding strategy attracting to $v$
  \item if $v \in V_{\odd}$, then $\sigma(v)$ is (one of the) maximal successor(s) of $v$ w.r.t. $\rho$
 \end{itemize}
}
\end{restatable}

The key property concerning correctness of Algorithm \ref{alg:newspm} is proved in Proposition \ref{prop:algcorrectness}. The proposition utilises several lemmata, which we state below. Their proofs can be found in the appendix.

\begin{restatable}{lemma}{lemmaargissubgame}
\label{lem:arg_is_subgame}
 If $W$ is a subgame, then the set $\rem = W \setminus (\res \cup \irr)$ that is computed in line \ref{line:rem} induces a subgame (i.e. it does not have ``dead ends'').
\end{restatable}

\begin{restatable}{lemma}{lemmarestrlifttop}
\label{lem:restr_lift_top}
 Suppose some arbitrary lifting procedure has been applied on the entire $G$, yielding a temporary measure $\oldrho$  Assume that $W$ is a set of vertices that induces a well-defined subgame of $G$, $G \cap W$, and moreover the only edges leading from the even-owned vertices in $W$ to $G \setminus W$, have top-labelled vertices as endpoints. Furthermore, suppose that $D \subseteq W$ induces an \odd-dominion on the subgame $G \cap W$. Then lifting of $\rho$ restricted to $W$ will finally yield a top value.
\end{restatable}

\begin{restatable}{lemma}{lemmasubdominion}
\label{lem:subdominion}
 Let $D$ be an $\odd$-dominion within a game $G$. Let $D' \subseteq D$ be a nonempty subset of $D$ such that $D'$ has only \even-escapes to $D \setminus D'$. That is, for all $u \to w$ such that 
 $u \in D'$ and $w \in D \setminus D'$, it holds that $u \in V_{\even}$. Then $D'$ is a dominion within any subgame $G'$ containing the entire $D'$, but not containing any vertices from $D \setminus D'$.
\end{restatable}

We are now in the position to prove the key result of this section (here, we provide a high-level description of the main steps of the proof, and for the details we refer to the appendix).

\begin{restatable}{proposition}{propalgcorrectness}
 \label{prop:algcorrectness}
  
  Assume a lifting context $\tuple{G,ms}$. Suppose that during the execution of the procedure $\newspm$, before some while-loop iteration (line \ref{line:outerwhile}), $\rho$ has a certain value $\rho_I$, and it holds that $W$ is suitable w.r.t. $\rho_I$. Let $\rho_F$ be the final value of $\rho$ when $\newspm$ returns. Then after executing $\newspm$, the following hold:
  \begin{itemize}
   \item for all $w \in W$, $w \in \winsubodd{G \cap W} \,\iff\, \rho_F(w) = \top$
   \item $\sigma|_{\winsubodd{W}}$ is winning for \odd in the context of a subgame $G \cap W$
  \end{itemize}
 \end{restatable}
 
 \begin{proof}   
 
 We proceed with structural induction on $W$; assume that the statement holds for all suitable subsets of $W$. We will prove that it holds for $W$.
 
 \begin{itemize}
   \item[I] If $\winsubodd{G \cap W} \neq \emptyset$, then the iteration of liftings in lines \ref{line:liftloop_start}--\ref{line:liftloop_end} will eventually lead to a top-value in some vertex $v$.    
   
   \item[II] Vertex $v$ defined in line \ref{line:firsttop_v} belongs to $\winsubodd{G \cap W} $

   \item[III] In line \ref{line:rem}, $\rem$ is suitable w.r.t. $\rho$
   
   \item[IV] For any $D \subseteq (\res \cup \rem)$ which is an \odd dominion in the context of $G \cap W$, all vertices in $D \setminus \res$ are also won by \odd in $G \cap \rem$, i.e. 
   $D \setminus \res \subseteq \winsubodd{G \cap \rem}$

   \item[V] $\sigma|_{\rem}$ is winning for \odd on $\winsubodd{G \cap \rem}$

   \item[VI] $\sigma|_{\res \cup \winsubodd{G \cap \rem}}$ is a winning strategy for $\odd$ in $\winsubodd{G \cap \rem} \cup \res$ in the context of the subgame $G \cap W$

   \item[VII] $\sigma|_{A}$ is a winning strategy for $\odd$ in $A$ in the context of the subgame $G \cap W$

   \item[VIII] if the new $W^{new} := W \setminus A$, computed in line \ref{line:smallerW}, is not empty, then it is suitable w.r.t. $\rho$. 

  \end{itemize}

 \end{proof}

\fi

\ifreport

\begin{restatable}{theorem}{thmcorrectness}
\label{thm:correctness}
$\newspm$ returns the least game progress measure of $G$, and
the strategy $\sigma$, fully defined on $\winsubodd{G}
\cap V_{\odd}$, is a winning strategy of player \odd on $\winsubodd{G}$.

\end{restatable}

\begin{proof}
 As $V$ is obviously suitable w.r.t. the initial $\rho = \lambda v.(0,\dots,0)$, from Proposition \ref{prop:algcorrectness} we immediately obtain correct resolution of the $\winsubodd{G}$ part.
\end{proof}

\paragraph*{Running time} 
\fi 
Every attractor computation takes $O(n+m)$
time, and whenever it occurs, at least one new vertex is `resolved'.
Hence the total extra time introduced by the attractor computations
is bounded by $O(n(n+m))$. As with the standard SPM, the lifting
operations dominate the running time, and their total number for
every vertex is bounded by the size of $\mathbb{M}^{\even}$.
\begin{theorem}
For a game $G$ with $n$ vertices, $m$ edges, and
$d$ priorities, $\textsc{Solve}$ solves $G$ and computes winning
strategies for player $\even$ and $\odd$ in worst-case $\runtimefloor$.
\end{theorem}
\ifreport
\begin{proof}
\begin{equation*}
O\left(n\left(n+m \right) + dm \left(\frac{n}{\lfloor \frac{d}{2} \rfloor}\right)^{\lfloor \frac{d}{2} \rfloor} \right) = O\left(dm \left(\frac{n}{\lfloor \frac{d}{2} \rfloor}\right)^{\lfloor \frac{d}{2} \rfloor} \right)
\end{equation*}
\end{proof}
\fi

\begin{example}
We illustrate the various aspects of Algorithm~\ref{alg:newspm} on
the game $G$ depicted in Figure~\ref{fig:illustrating_example}, with two (overlapping) subgames
$G_1$ and $G_2$.
Note that the entire game is an $\odd$-paradise: every vertex
eventually is assigned measure $\top$ by Algorithm~\ref{alg:newspm}
(and Algorithm~\ref{alg:spm}, for that matter). Suppose we use a
lifting strategy prioritising $v_2,v_3,v_7$ and $v_8$; then vertex
$v_3$'s measure is the first to reach $\top$, and the successor
with maximal measure is $v_7$. Therefore, $\odd$'s strategy is to
move from $v_3$ to $v_7$.  The set $\res$, computed next consists of
vertices $v_3$ and $v_2$; the strategy for $v_2$ is set to $v_3$
and its measure is set to $\top$. The $\even$-attractor into those
vertices with priorities $\ge 3$, \ie, vertices
$v_1$ and $v_4$, is exactly those vertices, so, next, the algorithm
zooms in on solving the subgame $G_1$.
\begin{figure}[h!tbp]
\centering
\begin{tikzpicture}[>=stealth']
\tikzstyle{every node}=[draw, inner sep=0pt, outer sep=0pt, node distance=40pt];
  \node[draw=none,shape=diamond,minimum size=26]   (s1)                                 {};
  \node[draw=none,shape=rectangle,minimum size=22] (s2) [right of=s1,xshift=20pt]                   {};
  \node[draw=none,shape=rectangle,minimum size=22] (s3) [right of=s2,xshift=20pt]                   {};
  \node[draw=none,shape=diamond,minimum size=26]   (s4) [right of=s3,xshift=80pt]                   {};
  \node[draw=none,shape=rectangle,minimum size=22] (s5) [below of=s4]                   {};
  \node[draw=none,shape=diamond,minimum size=26]   (s6) [left of=s5,xshift=-20pt]                   {\scriptsize 5};
  \node[draw=none,shape=rectangle,minimum size=22] (s7) [left of=s6,xshift=-20pt]                   {};
  \node[draw=none,shape=diamond,minimum size=26]   (s8) [left of=s7,xshift=-20pt]                   {};
  \node[draw=none,shape=rectangle,minimum size=22] (s9) [left of=s8,xshift=-20pt]                   {};


\draw [fill=gray!15, thick,rounded corners=55pt, fill opacity=0.5] 
    ($(s5)+(1.7,-0.8)$) -- 
    ($(s4)+(1.0,1.65)$) -- 
    ($(s6)+(-2.2,-0.8)$) -- cycle
;
\draw 
    [fill=gray!10, thick,rounded corners=10pt, fill opacity=0.3] 
    ($(s5)+(0.7,-1.1)$) -- 
    ($(s9)+(-1.2,-1.1)$) -- 
    ($(s9)+(-1.2,0.6)$) -- 
    ($(s5)+(0.7,0.6)$) -- cycle;

\draw ($(s4)+(1.0,-0.5)$) node[draw=none] {$G_2$};
\draw ($(s9)+(-0.8,-0.7)$) node[draw=none] {$G_1$};

  \node[label=above:{$v_1$},shape=diamond,minimum size=26]   (y1)                                 {\scriptsize 0};
  \node[label=above:{$v_2$},shape=rectangle,minimum size=22] (y2) [right of=y1,xshift=20pt]                   {\scriptsize 4};
  \node[label=above:{$v_3$},shape=rectangle,minimum size=22] (y3) [right of=y2,xshift=20pt]                   {\scriptsize 3};
  \node[label=above:{$v_4$},shape=diamond,minimum size=26]   (y4) [right of=y3,xshift=80pt]                   {\scriptsize 1};
  \node[label=below:{$v_5$},shape=rectangle,minimum size=22] (y5) [below of=y4]                   {\scriptsize 4};
  \node[label=below:{$v_6$},shape=diamond,minimum size=26]   (y6) [left of=y5,xshift=-20pt]                   {\scriptsize 5};
  \node[label=below:{$v_7$},shape=rectangle,minimum size=22] (y7) [left of=y6,xshift=-20pt]                   {\scriptsize 5};
  \node[label=below:{$v_8$},shape=diamond,minimum size=26]   (y8) [left of=y7,xshift=-20pt]                   {\scriptsize 6};
  \node[label=below:{$v_9$},shape=rectangle,minimum size=22] (y9) [left of=y8,xshift=-20pt]                   {\scriptsize 4};

\path[->]
  (y1) edge[bend left] (y3)
  (y2) edge (y3) edge[bend left] (y8)
  (y3) edge (y7)
  (y4) edge (y3) edge[loop right] (y4)
  (y5) edge (y4) edge[bend left] (y6)
  (y6) edge[bend left] (y5) edge[bend left] (y7)
  (y7) edge [bend left] (y6) edge[bend left] (y8) edge[bend left] (y9)
  (y8) edge [bend left] (y2) edge[bend left] (y7)
  (y9) edge (y1) edge[loop left] (y9)
;

\end{tikzpicture}
\caption{An example game $G$ with two (overlapping) subgames $G_1$ and $G_2$.}
\label{fig:illustrating_example}
\end{figure}

Suppose that within the latter subgame, we prioritise the lifting of
vertex $v_7$ and $v_8$; then vertex $v_7$'s measure is set to $\top$
first, and $v_7$'s successor with the largest measure is $v_8$;
therefore $\odd$'s strategy is to move from $v_7$ to $v_8$. At this
point in the algorithm, $\res$ is assigned the set of vertices $v_7$
and $v_8$, and the measure of $v_8$ is set to $\top$. Note that in this
case, in this subgame, the winning strategy for $\odd$ on $v_7$ is to remain within
the set $\res$. Since all
remaining vertices have more signficant priorities than $v_7$, or
are forced by $\even$ to move there, the next recursion is run on
an empty subgame and immediately returns without changing the
measures. Upon return, the $\odd$-attractor to all $\odd$-won
vertices (within the subgame $G_1$, so these are only the 
vertices $v_7$ and $v_8$)
is computed, and the algorithm continues solving the remaining
subgame (\ie the game restricted to vertices $v_5,v_6$ and $v_9$),
concluding that no vertex in this entire game will be assigned
measure $\top$.

At this point, the algorithm returns to the global game again and
computes the $\odd$-attractor to the vertices won by player $\odd$ at that stage
(\ie vertices $v_2,v_3,v_7$ and $v_8$), adding vertices $v_1$ and $v_9$,
setting their measure to $\top$ 
and setting $\odd$'s strategy for $v_9$ to move to $v_1$. 

As a final step, the algorithm next homes in on the subgame $G_2$,
again within the larger game.  The only vertex assigned measure
$\top$ in the above subgame is vertex $v_4$; at this point $\res$ is
assigned all vertices in $G_2$, the measure of $v_5$ and $v_6$ is set to
$\top$ and the $\odd$ strategy for vertex $v_5$ is set to $v_4$.
This effectively solves the entire game.

\end{example}

\section{Conclusions and Future Work}
The two key contributions of our work are:
\begin{enumerate}
 \item We have proposed a more operational interpretation of progress measures by characterising the types of plays that players can enforce. 
 \item We have provided a modification of the SPM algorithm that allows to compute the winning strategies for both players in one pass, thus improving the worst-case running time of strategy derivation.
 \end{enumerate}
The second enhancement has been made possible due to a thorough study of the contents of progress measures, and their underapproximations in the intermediate stages of the algorithm (building on the proposed operational interpretation).

As for the future work, we would like to perform an analysis of SPM behaviour on special classes of games, along the same lines as we have done in case of the recursive algorithm \cite{GW:13}, specifically, identifying the games for which SPM runs in polynomial time, and studying enhancements that allow to solve more types of games efficiently.

\bibliographystyle{plain}
\bibliography{spm}

\newpage

\appendix


\ifreport

\section*{Appendix}

\subsubsection{Proofs of Section \ref{sec:lhg}}

\propositionlhgcycledecrease*

\begin{proof}
 Suppose there is a path in $LH$ $(w,m) = (w_0,m_0) \to (w_1,m_1) \to \dots \to (w_n,m_n) =(w,m')$. Let $\rho_{t}$ denote the value of $\rho$ after the $t$-th lifting; for $j \in \{0, \dots , n\}$ we will define $t_j$ as the step in which $w_j$ was lifted to $m_j$. From monotonicity of lifting and the fact that two vertices cannot be lifted at the same moment, it is not difficult to observe that $t_{j} > t_{j+1}$, and hence from transitivity we have $t_0 > t_n$
 We consider two cases:
\begin{itemize}
 \item if $n>1$, then \[ m = \rho_{t_0}(w_0) \stackrel{\text{def. of $t_0$}}{>} \rho_{t_0-1}(w_0) \stackrel{t_0-1 \geq t_n}{\geq} \rho_{t_n}(w_0) = \rho_{t_n}(w_n) = m'\]
 
 \item if $n=1$, then \[m = \rho_{t_0}(w_0)  \stackrel{\text{def. of $t_0$}}{>} \rho_{t_0-1}(w_0) = \rho_{t_n}(w_n) = m'\]
 
\end{itemize}
\end{proof}

\thmcentraalgeneralised*
  
\begin{proof}
 Consider the strategy $\sigma$ from Corollary \ref{corr:lhbased_strategy_for_top}. No play conforming to $\sigma$ can visit a vertex with priority smaller than $k$, unless it is a top-labelled node occurring after a prefix of non-top nodes having priorities at least $k$. Therefore no such play can enter 
 $\myattrW{}{\even}{\{ w' \in V \,\mid\, \prio{w'} < k \}}{\nontop{\rho}}$, which is a subset of $\myattrW{}{\even}{\{ w' \in V \,\mid\, \prio{w'} < \minprio{W} \}}{\nontop{\rho}}$. From this and the fact that $W$ has only non-profitable \odd-escapes, we know that $\sigma(w) \in W$ for all $w \in \domain{\sigma} \cap W$. Hence $\sigma$ is a well-defined strategy in the context of the subgame $G \cap W$.
 
 Consider any play $\pi$ conforming to $\sigma$, and staying within $G \cap W$. In case when $\pi$ is finite and visits the first top-labelled vertex, it can only be $v$, on which $\sigma$ is defined.
 If $\pi$ visits $v$ inifinitely often, then $\pi$ is winning for \odd, because $\pi$ does not visit any more significant priority in between. If $v$ is visited only finitely many times, from Corollary \ref{corr:lhbased_strategy_for_top} we know that $\pi$ is winning for \odd. In both cases no priority more significant than $k$ is encountered.

\end{proof}

\subsubsection{Proofs of Section \ref{sec:correctness}} 
 
 \lemmaargissubgame*
 
 \begin{proof}

 Since $G$ is a well-defined game, we know that $\post{u} \neq \emptyset$. Suppose, towards contradiction, that there is a node $u \in V \setminus (\res \cup \irr)$ such that $\post{u} \subseteq \res \cup \irr$. We distinguish two cases:

\begin{itemize}
 \item if $u \in V_{\even}$, then, since $V \setminus \irr$ is an \odd-closed set, we have $\post{u} \cap \irr = \emptyset$, so the only possibility is that $\post{u} \subseteq \res$. However, since $u \in V \setminus (\res \cup \irr)$, we have $\prio{u} \geq k$ and hence it must be the case that $u \in \myattr{\geq k}{\odd}{\res} = \res$, a contradiction.
 
 \item if $u \in V_{\odd}$, then $\post{u} \cap \res = \emptyset$ (because $\prio{u} \geq k$ and in that case we would have $u \in \myattr{\geq k}{\odd}{\res} = \res$). Therefore $\post{u} \subseteq \irr$ - but it means that $u \in  \myattr{}{\even}{\irr} = \irr$, a contradiction.
\end{itemize}
\end{proof}

\lemmarestrlifttop*

\begin{proof}
 We can transform $G$ to a game $G'$, in which we remove all edges from $W$, to $V \setminus W$. The codomain of SPM for $G'$ is the same as for $G$. Lifiting in $G'$ must reach a top because of $D$, and it will yield smaller values that lifting in $G$. Indeed, the removed edges leading from $W \cap V_{\even}$ (vertices on which min is taken) outside lead only to top-labelled vertices, so the effect is the same as removing these edges. Edges originating in $W \cap V_{\odd}$ may only increase the measure.
\end{proof}

The purpose of the following lemma is to establish that the remainder of the dominion containing $\res$, and contained in $\rem$, is a dominion within $\rem$. Note that the subgame $G'$ mentioned in the lemma may actually not exist (however, in our case it always exists ($\rem$)-- see Lemma \ref{lem:arg_is_subgame}).

\lemmasubdominion*

\begin{proof} 
 Consider the \odd strategy $\sigma$, winning for \odd, and closed on $D$. Since there are no \odd-escapes from $D'$ to $D \setminus D'$, for any $w \in \domain{\sigma} \cap D'$ we have $\sigma(w) \in D'$. Hence for any subgame $G'$ containing $D'$, but not any vertex from $D \setminus D'$, the strategy $\sigma$ restricted to $D'$ is well-defined, and the two desired properties are carried over from the original strategy.
\end{proof}

\propalgcorrectness*

 \begin{proof}   
 
 We proceed with structural induction on $W$; assume that the statement holds for all suitable subsets of $W$. We will prove that it holds for $W$.
 
 \begin{itemize}
   \item[I] If $\winsubodd{G \cap W} \neq 0$, then the iteration of liftings in lines \ref{line:liftloop_start}--\ref{line:liftloop_end} will eventually lead to a top-value in some vertex $v$. 
   
   \emph{Proof} Follows from Lemma \ref{lem:restr_lift_top} and the assumption of $W$ having only top \even-escapes.
   
   \item[II] Vertex $v$ defined in line \ref{line:firsttop_v} belongs to $\winsubodd{G \cap W} $
   
   \emph{Proof} Immediately from Theorem \ref{thm:centraal-generalised}.
   \item[III] In line \ref{line:rem}, $\rem$ is suitable w.r.t. $\rho$
   
   \emph{Proof}
   \begin{enumerate}
    \item[S1] obvious, since $v$ is the only vertex in $W$ such that $\rho(v)=\top$, and $v \notin \rem$
    \item[S2] proved in Lemma \ref{lem:arg_is_subgame}
    \item[S3] from the assumption about $W$, all \odd escapes from $\rem$ outside $W$ are non-profitable. Consider any \odd-escape to $W \setminus \rem$, that is, $u \to w$ such that $u \in \rem \cap V_{\odd}$, and 
    $u \in W \setminus \rem = \irr \cup \res$. Suppose towards contradiction that $w \notin \irr$, hence $w \in \res$. But since $\prio{u} \geq k$ and $u \in W \cap V_{\odd}$, we have $u \in \res$, and therefore $u \notin \rem$, a contradiction.
    \item[S4] from the assumption about $W$, all \even-escapes from $\rem$ outside $W$ are to top-labelled nodes. Observe that no even-owned vertex $w \in \rem$ can have an edge to $\irr = \attrW{\even}{\{ w \in W \,\mid\, \prio{w} < k\}}{W}$, because $w$ would then belong to $\irr$. 
    Hence the only \even-escapes outside $\rem$ lead to $\res$, and every vertex therein has measure top.    
   \end{enumerate}
   
   \item[IV] For any $D \subseteq (\res \cup \rem)$ which is an \odd dominion in the context of $G \cap W$, all vertices in $D \setminus \res$ are also won by \odd in $G \cap \rem$, i.e. 
   $D \setminus \res \subseteq \winsubodd{G \cap \rem}$
   
   \emph{Proof} Follows from Lemma \ref{lem:subdominion} and the fact that there are no \odd-escapes from $\rem$ to $\res$.
   
   \item[V] $\sigma|_{\rem}$ is winning for \odd on $\winsubodd{G \cap \rem}$
   
   \emph{Proof} Follows from the inductive hypothesis.

   \item[VI] $\sigma|_{\res \cup \winsubodd{G \cap \rem}}$ is a winning strategy for $\odd$ in $\winsubodd{\rem} \cup \res$ in the context of the subgame $G \cap W$
   
   \emph{Proof} Follows from Lemma \ref{lem:extend_strat_with_attr} and the previous point.
   
   \item[VII] $\sigma|_{A}$ is a winning strategy for $\odd$ in $A$ in the context of the subgame $G \cap W$
   
   \emph{Proof} Follows from the previous point and the obvious fact that extending the dominion with its attractor, and assigning the attracting strategy for the extended part yields a winning strategy.

   \item[VIII] if the new $W := W \setminus A$, computed in line \ref{line:smallerW}, is not empty, then it is suitable w.r.t. $\rho$
   
   \begin{enumerate}
    \item[S1] by removing $A$, all top-labelled vertices have been removed from $W$
    \item[S2] $W$, as a complement of an \odd-attractor, is \even-closed, and constitutes a proper subgame
    \item[S3] the new $W$ doesn't have any additional \odd-escapes as compared to the old one
    \item[S4] the only possible new \even-escapes lead to $A$, in which all vertices are top-labelled
   \end{enumerate}

  \end{itemize}
 \end{proof}

\fi
  
\end{document}